\pdfoutput=1
\documentclass[a4paper]{article}
\usepackage[utf8]{inputenc}
\usepackage[T1]{fontenc}

\usepackage[
  colorlinks=true,
  linkcolor=darkgray,
  citecolor=darkgray,
  urlcolor=darkgray
]{hyperref}
\usepackage{amsmath,amsthm,amssymb}
\usepackage{authblk}
\usepackage{circuitikz}
\usepackage{doi}
\usepackage{enumitem}
\usepackage{lmodern}
\usepackage{microtype}
\usepackage[square, numbers]{natbib}
\usepackage{orcidlink}
\usepackage{tabularx}
\usepackage{tikz}
\usepackage{xcolor}
\usepackage{xspace}

\newcommand{\cc}{\mathbf{c}}
\newcommand{\e}{\mathbf{e}}

\newcommand{\z}{\mathbf{z}}
\newcommand{\vv}{\mathbf{v}}
\newcommand{\ww}{\mathbf{w}}

\newcommand{\CC}{\mathbb{C}}
\newcommand{\QQ}{\mathbb{Q}}
\newcommand{\QQGE}{\QQ_{\geqslant 0}}
\newcommand{\NN}{\mathbb{N}}
\newcommand{\RR}{\mathbb{R}}
\newcommand{\ZZ}{\mathbb{Z}}

\newcommand{\AAA}{\mathcal{A}}

\newcommand{\FFF}{\mathcal{F}}

\newcommand{\LLL}{\mathcal{L}}
\newcommand{\NNN}{\mathcal{N}}
\newcommand{\NNNC}{\NNN^-}
\newcommand{\NNNR}{\NNN^+}
\newcommand{\NNNRR}{\NNN^{(\RR)}}
\newcommand{\QQQ}{\mathcal{Q}}

\newcommand{\TTT}{\mathcal{T}}
\newcommand{\VVV}{\mathcal{V}}

\newcommand{\ACF}{\textrm{ACF}_0}
\newcommand{\RCF}{\textrm{RCF}}

\newcommand{\cj}[1]{\overline{#1}}
\newcommand{\ii}{\mathord{\mathrm{i}}}
\newcommand{\im}{\Im}
\newcommand{\re}{\Re}

\newcommand{\mult}{\mathbin{{\, \cdot \, }}}

\newcommand{\sq}[1]{\lq #1\rq}

\newcommand{\tequiv}{\approx}

\newcommand{\myland}{\mathrel{\land}}
\newcommand{\mylor}{\mathrel{\lor}}
\newcommand{\mylongrightarrow}{\mathrel{\longrightarrow}}
\newcommand{\mylongleftrightarrow}{\mathrel{\longleftrightarrow}}

\newcommand{\mcpec}{\textsc{mcpec}\xspace}
\newcommand{\cad}{\textsc{cad}\xspace}
\newcommand{\qe}{\textsc{qe}\xspace}

\newtheorem{theorem}{Theorem}
\newtheorem{corollary}{Corollary}
\newtheorem{proposition}{Proposition}
\theoremstyle{definition}
\newtheorem{definition}{Definition}
\newtheorem{lemma}{Lemma}
\theoremstyle{remark}
\newtheorem{example}{Example}
\newtheorem{remark}{Remark}

\begin{document}

\title{Pseudo-Complex Quantifier Elimination}

\author[1]{Nicolas Faroß\,\orcidlink{0009-0001-4419-2337}\,}
\author[2,3]{Thomas Sturm\,\orcidlink{0000-0002-8088-340X}\,}
\affil[1]{Chalmers University of Technology and\authorcr University of Gothenburg, Gothenburg, Sweden\authorcr
\href{mailto:faross@chalmers.se}{faross@chalmers.se}}

\affil[2]{University of Lorraine, CNRS, Inria, LORIA, Nancy, France\authorcr
\href{mailto:thomas.sturm@cnrs.fr}{thomas.sturm@cnrs.fr}}

\affil[3]{MPI-INF and Saarland University, SIC, Saarbrücken, Germany\authorcr
\href{mailto:sturm@mpi-inf.mpg.de}{sturm@mpi-inf.mpg.de}}

\maketitle

\begin{abstract}
We describe the design of a quantifier elimination framework for the complex numbers in the language of ordered rings supplemented with symbols for the imaginary unit, real parts, imaginary parts, and conjugates. Technically, we use a reduction to real quantifier elimination followed by a heuristic reinterpretation of the results within our complex framework. We present computational examples using a prototypical implementation of our approach in our Python-based open-source system Logic1.
\end{abstract}

\section{Introduction}
Based on research conducted during the 1930s \cite{Tarski1930}, Tarski published the first complete quantifier elimination (\qe) procedure for the real numbers in 1948 \cite{Tarski:48a}. Subsequent work by Seidenberg in 1954 provided a geometric interpretation of these results, known today as the Tarski--Seidenberg Theorem \cite{Seidenberg1954}, which naturally gives rise to the idea of applying these methods to real-world problems. Remarkably, concluding remarks in a 1954 report to the US Army by Davis already address the infeasibility of implementing real \qe, indicating early interest in the problem \cite{Davis:54a}. Given the state of the available programming infrastructure at the time, along with the fact that the Tarski--Seidenberg procedure is not even elementary recursive in the worst case, Davis's assessment was certainly justified. This picture began to change during the 1970s, when Collins and his students developed and implemented real \qe methods based on cylindrical algebraic decomposition (\cad) \cite{Collins:75}. In the mid-1980s partial \cad constituted a major breakthrough \cite{CollinsHong:91}. Complemented with more specialized virtual substitution methods aimed at formulas of low degree \cite{Weispfenning:88a,Weispfenning:97b,Kosta:16a}, real \qe became applicable to a considerable number of application areas with a significant publication record, also at the CASC conference series. Another class of implemented approaches to real \qe is based on multivariate parametric real root counting \cite{Weispfenning:98a,DolzmannSturm:99a,LeSafeyElDin2021}.

The early history of \qe for the ring of complex numbers resembles that of real \qe outlined above. It was again Tarski who developed a first \qe procedure during the 1940s, without explicitly publishing it at the time \cite {Tarski1954}. With his famous theorem on the constructability of images, Chevalley contributed a geometric perspective on Tarski's result for the complex numbers, resembling Seidenberg's work for the reals \cite{Chevalley1943}. As in the real case, Tarski's procedure is not elementary recursive. A more modern approach, implemented in Redlog \cite{DolzmannSturm:97a}, is based on comprehensive Gröbner bases \cite{Weispfenning:92a}. More generally, complex \qe is supported in computer algebra systems through algebraic decomposition techniques. In particular, Maple’s Regular Chains library \cite{ChenMorenoMaza2012} provides functionality for manipulating constructible sets and computing projections via triangular decomposition \cite{AubryLazardMorenoMaza1999}. Other systems such as Singular, Magma, and Macaulay2 offer the necessary elimination primitives, but require manual construction of \qe procedures.

When focusing on rigorous logical settings, real \qe is currently much better supported in software than complex \qe, and numerous applications have been documented in the scientific literature; see, e.g., \cite{DSW:98,DolzmannSturm:99a,Sturm:17a} and the references therein. One recent line of research focuses on the qualitative analysis of chemical reaction networks, e.g., \cite{RahkooySturm:21a}. A major reason for the stronger interest in real \qe is that classical complex \qe in the language of rings lacks symbols naturally expected by users in the natural sciences and engineering, namely a constant for the imaginary unit, as well as operators for the real and imaginary parts and for complex conjugation. In fact, the addition of these symbols implicitly reduces the problem to real \qe, as is essentially known in the community. Our work makes this approach explicit by providing a formal \qe procedure in the extended language, which we call \emph{pseudo-complex quantifier elimination}.

Our original contributions are the following:

\begin{enumerate}
    \item We formally introduce terms and first-order formulas, define various relevant syntactic normal forms, and provide a rigorous formal semantics. (Sections~\ref{SE:terms} and~\ref{SE:formulas})
    \item We reduce \qe for our framework to real \qe. We furthermore prove that the first-order theory corresponding to our framework is complete and decidable. (Section~\ref{SE:qe})
    \item Building on real \qe, we obtain \qe results with terms inductively constructed from rational constants and the real and imaginary parts $\re(z)$ and $\im(z)$ of complex variables $z$, exclusively using ring operations. We introduce graph-based heuristics to produce equivalent, more compact results in the unrestricted language and prove complexity bounds for these heuristics. (Section~\ref{SE:reinter})
    \item We give asymptotic complexity bounds for our \qe, which coincide with the known bounds for classical real and complex \qe. From a practical perspective, our approach makes existing real \qe software accessible for problems over complex numbers. (Section~\ref{SE:complexity})
    \item We have implemented our approach in our Python-based open-source system Logic1. A number of computation examples demonstrate the practical applicability of our approach. (Section~\ref{SE:implementation})
\end{enumerate}
We finally summarize and evaluate our results in Section~\ref{SE:conclusions}.

\section{Language, Terms and Normal Forms}\label{SE:terms}

We use the countably infinite \emph{language} of ordered rings along with constant symbols for all non-negative rational numbers, plus another constant symbol for the imaginary unit, plus further unary function symbols for the real part, the imaginary part, and the conjugate:
\begin{equation*}
    \LLL = \bigl\{\,
        q^{(0)},\,
        +^{(2)},\,
        -^{(1)},\,
        \mult^{(2)},\,
        \ii^{(0)},\,
        \re^{(1)},\,
        \im^{(1)},\,
        \cj{\,\cdot\, }^{(1)};\,
        <^{(2)},\,
        \leq^{(2)},\,
        \neq^{(2)}
    \,\bigm\vert\,
        q \in \QQGE
    \, \bigr\}.
\end{equation*}
Note that \sq{$<$}, \sq{$\leq$}, and \sq{${\neq}$} are relation symbols, while all other symbols in $\LLL$ are function symbols, including constant symbols. Numbers in parentheses indicate the arities of respective symbols. We furthermore fix a countably infinite, strictly ordered set $(\VVV, \prec)$ of \emph{variables}. As usual, \emph{terms} are built inductively, starting with constant symbols and variables and applying non-constant function symbols, respecting their arities. This yields the countably infinite set $\TTT$ of all terms with function symbols from $\LLL$ and variables from $\VVV$.

Let $n \in \NN$ and $z_1$, \dots,~$z_n \in \VVV$ with $z_1 \prec \dots \prec z_n$, and set $\z = (z_1, \dots, z_n)$. We write $\TTT_\z \subseteq \TTT$ for the set of all terms over variables in $\z$.

Consider $t \in \TTT$ and let $\{z_1, \dots, z_n\} \subseteq \VVV$ be a superset of the variables occurring in $t$. Then $t(z_1, \dots, z_n)$ is called an \emph{extended term}. Each extended term induces an \emph{interpretation} $t(z_1, \dots, z_n)^*: \CC^n \to \CC$. Two terms $t_1$, $t_2 \in \TTT_\z$ are \emph{equivalent} if $t_1(\z)^* = t_2(\z)^*$. We then write $t_1 \tequiv t_2$.

\begin{remark}[Universal statements about extended terms]\label{REM:universal}
    For $t_1$, $t_2 \in \TTT_\z$ the following are equivalent:
    \begin{enumerate}[label=(\roman*), leftmargin=*]
        \item $t_1 \tequiv t_2$;
        \item $t_1(\z')^* = t_2(\z')^*$ for at least one $m \in \NN$ and $\z' \in \VVV^m$ such that both $t_1(\z')$ and $t_2(\z')$ are extended terms;
        \item $t_1(\z')^* = t_2(\z')^*$ for all $m \in \NN$ and $\z' \in \VVV^m$ such that both $t_1(\z')$ and $t_2(\z')$ are extended terms.
    \end{enumerate}
    The reason is that modification of the extension by permutation, addition, or removal of variables not occurring in $t_1$, $t_2$ does not affect the universal statement about all possible complex arguments in the definition of term equivalence.\qed
\end{remark}

We call $\NNN_\z \subseteq \TTT_\z$ a system of \emph{normal forms} for $\TTT_\z$ if for each $t \in \TTT_\z$ there is $t' \in \NNN_\z$ with $t' \tequiv t$. We call $\NNN_\z$ a system of \emph{unique} normal forms if there is exactly one such $t'$ for each $t$.

Let $t_1$, \dots, $t_n \in \TTT$. As usual, we agree that multiplication has higher precedence than addition. Furthermore, we shortly write $t_1 + \dots + t_n$ and $t_1 \cdots t_n$ for iterated left-associative application of addition and multiplication, respectively. Finally, exponentiation $t_1^m$ denotes $m$-fold multiplication for $m \geq 2$.

Denote by $\QQ[\ii]$ the algebraic extension ring, whose elements can be uniquely written as linear polynomials $q_1\ii + q_2$ with $q_1$, $q_2 \in \QQ$. We note that $\QQ[\ii]$ is even a field, which equals the extension field $\QQ(\ii)$. On these grounds, we specify \emph{conjugate normal forms} $\NNNC_\z$ and \emph{Cartesian normal forms} $\NNNR_\z$ for $\TTT_\z$  as follows.

\begin{proposition}[Conjugate normal form]\label{PR:cnf}
    Consider $\TTT_\z$ with $\z \in \VVV^n$ for $n \in \NN$. Define the polynomial ring $R = \QQ[\ii][z_1, \dots, z_n, \cj{z_1}, \dots, \cj{z_n}]$ with $2n$ algebraically independent generators. Let $\NNNC_\z \subseteq \TTT_\z$ be the set of terms corresponding to polynomials in $R$ using linear polynomial coefficients and degree-lexicographic order. Then $\NNNC_\z$ is a system of unique normal forms for $\TTT_\z$.
\end{proposition}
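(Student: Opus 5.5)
Two things need proving: existence (every term in $\TTT_\z$ is equivalent to one in $\NNNC_\z$) and uniqueness (two distinct normal forms are never equivalent). Existence will follow from a translation homomorphism into $R$. Uniqueness will follow from the algebraic independence of $z_j$ and $\cj{z_j}$ as functions on $\CC^n$.

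For existence, define $\varphi\colon \TTT_\z \to R$ by structural induction. Constants $q$ go to $q$, $\ii$ goes to $\ii$, and $z_j$ goes to $z_j$. Addition, negation and multiplication are sent to the ring operations of $R$. For the remaining symbols, let $\sigma\colon R \to R$ be the ring automorphism that conjugates the coefficients in $\QQ[\ii]$ and swaps $z_j \leftrightarrow \cj{z_j}$. Then set
$\varphi(\cj{t}) = \sigma(\varphi(t))$, $\varphi(\re(t)) = \tfrac12\bigl(\varphi(t) + \sigma(\varphi(t))\bigr)$ and $\varphi(\im(t)) = -\tfrac{\ii}{2}\bigl(\varphi(t) - \sigma(\varphi(t))\bigr)$.

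Each $p \in R$ has an evaluation map $p^*\colon \CC^n \to \CC$, obtained by substituting $a_j$ for $z_j$ and $\cj{a_j}$ for $\cj{z_j}$. By induction on $t$ one checks $\varphi(t)^* = t(\z)^*$. The only nontrivial cases are $\cj{\cdot}$, $\re$ and $\im$. They follow from $\sigma(p)^*(a) = \cj{p^*(a)}$, which holds because complex conjugation is a field automorphism of $\CC$, together with the identities $\re w = (w + \cj w)/2$ and $\im w = (w - \cj w)/(2\ii)$. Finally, writing $\varphi(t)$ in degree-lexicographic order with coefficients $q_1\ii + q_2$ produces a term $t' \in \NNNC_\z$ whose interpretation is $\varphi(t)^*$. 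Here negative rationals are expressed via unary $-$, since the constants are nonnegative. Hence $t' \tequiv t$.

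For uniqueness, by the above the interpretations of elements of $\NNNC_\z$ are exactly the functions $p^*$. Moreover, distinct normal-form terms correspond to distinct polynomials, since the normal form is read off canonically from the coefficients. So it suffices to show that the evaluation map $R \to \{\text{functions } \CC^n \to \CC\}$, $p \mapsto p^*$, is injective.

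The main obstacle is this injectivity step, because $z_j$ and $\cj{z_j}$ are not independent points of $\CC^{2n}$. I would substitute $z_j = x_j + \ii y_j$ and $\cj{z_j} = x_j - \ii y_j$ with $x_j, y_j$ real. This is an invertible linear change of variables over $\QQ[\ii]$ with inverses $x_j = (z_j + \cj{z_j})/2$ and $y_j = (z_j - \cj{z_j})/(2\ii)$. It therefore induces a ring isomorphism $R \cong \QQ[\ii][x_1, \dots, x_n, y_1, \dots, y_n]$.

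If $p^* = 0$, the corresponding polynomial $\tilde p$ vanishes on all of $\RR^{2n}$. A polynomial with complex coefficients vanishing on $\RR^{2n}$ is zero: induct on the number of variables, using that a nonzero univariate polynomial has finitely many roots while $\RR$ is infinite. Hence $\tilde p = 0$, so $p = 0$, which gives uniqueness.
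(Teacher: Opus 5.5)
Your proof is correct and follows essentially the paper's approach. Your recursive translation $\varphi$ with the automorphism $\sigma$ is an algebraic packaging of the paper's three rewriting steps: express $\re$ and $\im$ through conjugates, push conjugation inward, then expand in $R$. Your injectivity of $p \mapsto p^*$ is the paper's linear independence of the monomial functions $z^{e}\cj{z}^{e'}$ on $\CC^n$, which the paper only states as an observation and which you actually prove via the change of coordinates $z_j = x_j + \ii y_j$, $\cj{z_j} = x_j - \ii y_j$ to polynomials vanishing on $\RR^{2n}$.
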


\begin{proof}
    Let $t \in \TTT_\z$. We transform $t$ in three subsequent steps as follows:
\begin{enumerate}
    \item Equivalently rewrite subterms of $t$ as follows, where $s_1$ does not contain $\re$ and $s_2$ does not contain $\im$:
    \begin{equation*}
        \re(s_1) \to \frac{1}{2} \cdot (s_1 + \cj{s_1}),
        \quad
        \im(s_2) \to -\frac{1}{2} \cdot \ii \cdot (s_2 - \cj{s_2}).
    \end{equation*}
    This procedure terminates and yields $t' \in \TTT_\z$ with $t' \tequiv t$, and $t'$ does not contain $\re$, $\im$.
    \item In $t'$, equivalently propagate complex conjugates inwards by rewriting subterms as follows:
    \begin{align*}
        &\cj{s_1 + s_2} \to \cj{s_1} + \cj{s_2},\quad
        \cj{s_1 \cdot s_2} \to \cj{s_1} \cdot \cj{s_2},\quad
        & s_1, s_2 &\in \TTT_\z,\\
        &\cj{-s} \to -\cj{s},\quad
        \cj{\cj{s}} \to s,
        & s &\in \TTT_\z,\\
        &\cj{q} \to q,
        & q &\in \QQGE,\\
        &\cj{\ii} \to -\ii.
    \end{align*}
    This procedure terminates and yields $t'' \in \TTT_\z$ with $t'' \tequiv t$, and $t''$ is exclusively built from variables, constant symbols, conjugates of variables, and ring operator symbols.
    \item We equivalently expand $t''$ into a polynomial $t''' \in R$ with $t''' \tequiv t$, using the laws of arithmetic of commutative rings.
\end{enumerate}
    Uniqueness of $t'''$ follows from the observation that the set
    \begin{equation*}
        \bigl\{\,z_1^{e_1} \cdots z_n^{e_n}\cdot \cj{z_1}^{e_{n+1}} \cdots \cj{z_n}^{e_{2n}}(\z)^* \bigm| \e \in \NN^{2n}\,\bigr\}
    \end{equation*}
    is linear independent in the vector space of functions from $\CC^n$ to $\CC$, so that our normal forms are identical zero as functions if and only if they are formally equal to zero.
\end{proof}

Note that we can obtain alternative unique conjugate normal forms for $\TTT_\z$ by using other monomial orders instead of the degree-lexicographic order.

\begin{proposition}[Cartesian normal norm]\label{PR:cartnf}
    Consider $\TTT_\z$ with $\z \in \VVV^n$ for $n \in \NN$. Define the polynomial ring $S = \QQ[\re(z_1), \dots, \re(z_n), \im(z_1), \dots, \im(z_n)]$ with $2n$ algebraically independent generators. Let $\NNNR_\z = \{\, p_1 + \ii \cdot p_2 \in \TTT_\z \mid p_1, p_2 \in S\,\}$ using degree-lexicographic order for $p_1$ and $p_2$ and omitting summands $0$ and factors $1$ as usual. Then $\NNNR_\z$ is a system of unique normal forms for $\TTT_\z$.
\end{proposition}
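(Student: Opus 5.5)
The plan is to obtain the Cartesian normal form from the conjugate normal form of Proposition~\ref{PR:cnf} by a change of generators, and then to argue uniqueness by linear independence of real monomials. Let $t \in \TTT_\z$ be given. By Proposition~\ref{PR:cnf} there is $t' \in \NNNC_\z$ with $t' \tequiv t$, that is, a polynomial in $R = \QQ[\ii][z_1, \dots, z_n, \cj{z_1}, \dots, \cj{z_n}]$. The substitution is
\begin{equation*}
    z_j \to \re(z_j) + \ii \cdot \im(z_j),
    \qquad
    \cj{z_j} \to \re(z_j) - \ii \cdot \im(z_j).
\end{equation*}
Each replacement is an equivalence of terms, so by Remark~\ref{REM:universal} and compatibility of $\tequiv$ with the function symbols the result is still equivalent to $t$. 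I then expand with the commutative ring laws, using $\ii \cdot \ii \tequiv -1$ to reduce every power of $\ii$ to $\pm 1$ or $\pm\ii$. Coefficients from $\QQ[\ii]$ are written as $q_1\ii + q_2$ and split in the same way. Collecting the terms without $\ii$ into $p_1$ and those with a single factor $\ii$ into $p_2$ gives $p_1 + \ii \cdot p_2$ with $p_1, p_2 \in S$. Sorting by degree-lexicographic order and dropping zero summands and unit factors then yields an element of $\NNNR_\z$. This shows that $\NNNR_\z$ is a system of normal forms.

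For uniqueness, suppose $p_1 + \ii p_2 \tequiv p_1' + \ii p_2'$ with all $p_j, p_j' \in S$. Taking differences, it suffices to show that $a + \ii b \tequiv 0$ with $a, b \in S$ forces $a = b = 0$ formally. Evaluate at $\z \in \CC^n$ and write $z_j = x_j + \ii y_j$ with $x, y \in \RR^n$. Since $a$ and $b$ have rational coefficients, $a(x,y)$ and $b(x,y)$ are real numbers. Therefore $a(x,y) + \ii\, b(x,y) = 0$ gives $a(x,y) = b(x,y) = 0$ for all $(x,y) \in \RR^{2n}$.

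The remaining step is the standard fact that a real polynomial vanishing on all of $\RR^{2n}$ is the zero polynomial. I would prove it by induction on the number of variables, using that a nonzero univariate polynomial has only finitely many roots while $\RR$ is infinite. This is the analogue of the linear-independence observation in the proof of Proposition~\ref{PR:cnf}. It gives that $a$ and $b$ are formally zero, so the normal form is unique, since the ordering and omission conventions make the term representation of a pair $(p_1, p_2)$ unique.

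I expect the main obstacle to be the second step: making the passage from functional identity to formal identity rigorous. The key point is the splitting into real and imaginary parts, which relies on the coefficients being rational rather than complex. Without that, $p_1$ and $p_2$ would not be determined, because $\ii$ could migrate between them. Once the problem is reduced to real polynomials on $\RR^{2n}$, the algebraic independence of the generators of $S$ as functions follows directly.
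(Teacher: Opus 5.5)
Your proof is correct. The existence part is the same as the paper's: start from the conjugate normal form, substitute $z_j \to \re(z_j) + \ii \cdot \im(z_j)$ and $\cj{z_j} \to \re(z_j) - \ii \cdot \im(z_j)$, expand, and split off the $\ii$-part.

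For uniqueness you take a genuinely different route. The paper reduces uniqueness to Proposition~\ref{PR:cnf}. It asserts that distinct Cartesian normal forms $t_1''' \neq t_2'''$ come from distinct conjugate normal forms $t_1 \neq t_2$, which are inequivalent because conjugate normal forms are unique. You instead argue directly. You evaluate $a + \ii \cdot b \tequiv 0$, with $a, b \in S$, at the points $z_j = x_j + \ii y_j$. Rational coefficients force $a(x,y), b(x,y) \in \RR$, so both vanish on all of $\RR^{2n}$. The identity theorem for polynomials over the infinite field $\RR$ then gives $a = b = 0$.

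The paper's route is shorter and reuses the linear-independence fact it has already established. However, it tacitly needs the construction to be onto $\NNNR_\z$: every element of $\NNNR_\z$ must arise from some conjugate normal form. This holds because the substitution is an invertible linear change of generators over $\QQ[\ii]$, with inverse $\re(z_j) \to \frac{1}{2}(z_j + \cj{z_j})$ and $\im(z_j) \to -\frac{1}{2}\ii(z_j - \cj{z_j})$, but the paper does not spell this out.

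Your route needs no such bijectivity and does not use Proposition~\ref{PR:cnf} for uniqueness. It also makes explicit where the real structure enters: $p_1$ and $p_2$ are determined only because their coefficients are rational, so $\ii$ cannot migrate between them.
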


\begin{proof}
    Let $t \in \TTT_\z$, without loss of generality $t \in \NNNC_\z$. Recall from Proposition~\ref{PR:cnf} that $t$ corresponds to a polynomial in $\QQ[\ii][\z, \cj{\z}]$, where the coefficients are represented as linear polynomials in $\ii$. We proceed in three steps:
    \begin{enumerate}
        \item Denote $\z = (z_1, \dots, z_n)$ and rewrite the monomials of $t$ as follows:
        \begin{equation*}
            \prod_{i=1}^n z_i^{e_i} \cdot \prod_{i=1}^n \cj{z_i}^{e_{n + i}}\to
            \prod_{i=1}^n \bigl(\re(z_i) + \ii\cdot\im(z_i)\bigr)^{e_i} \cdot
            \prod_{i=1}^n \bigl(\re(z_i) - \ii\cdot\im(z_i)\bigr)^{e_{n + i}}.
        \end{equation*}
        This yields $t' \in \TTT_\z$ with $t' \tequiv t$, and $t'$ is exclusively built from constant symbols, subterms $\re(z_i)$, $\im(z_i)$, and ring operator symbols.
    \item Equivalently expand $t'$ into a polynomial
    \begin{equation*}
        t'' \in \QQ[\ii][\re(z_1), \dots, \re(z_n), \im(z_1), \dots, \im(z_n)]
    \end{equation*}
    with $t'' \tequiv t$, using the laws of arithmetic of commutative rings.
    \item Using the laws of arithmetic of commutative rings once again, it is straightforward to equivalently transform $t''$ into $t''' \in \NNNR_\z$.
    \end{enumerate}
    For uniqueness, it is sufficient to show that our normal forms $t'''$ are pairwise not equivalent. Consider $t_1'''$, $t_2'''$ with $t_1''' \neq t_2'''$. According to our construction, there are $t_1$, $t_2$ in conjugate normal form with $t_1 \neq t_2$ and $t_1''' \tequiv t_1$ and $t_2''' \tequiv t_2$. By the uniqueness of the conjugate normal form, we know that $t_1 \not\tequiv t_2$, hence $t_1''' \tequiv t_1 \not\tequiv t_2 \tequiv t_2'''$.
\end{proof}

Similarly to $\QQ[\ii][\z, \cj{\z}]$ above, we will shortly write $\QQ[\re(\z), \im(\z)]$ from now on. Again, we can obtain alternative unique Cartesian normal forms for $\TTT_\z$ by using other monomial orders instead of the degree-lexicographic order. We furthermore note that Cartesian normal forms of arbitrary terms can be more efficiently computed directly instead of going via conjugate normal forms. In the symbols $\NNNC_\z$ and $\NNNR_\z$, the \sq{$-$} and the \sq{$+$} point at the conjugate operation and the Cartesian coordinate system, respectively.

\begin{example}[Conjugate and cartesian normal forms]
    Consider
    \begin{align*}
        s_1 & = z^2+ \ii, &
        t_1 & = \bigl(\re(z)^2 - \im(z)^2\bigr) + \ii \cdot \bigl(2\re(z)\im(z) + 1\bigr),\\
        s_2 & = z\cj{z}, &
        t_2 & = \re(z)^2 + \im(z)^2,\\
        s_3 & = 2\ii + 1, &
        t_3 &= 1 + \ii \cdot 2.
    \end{align*}
    For $i \in \{1, \dots, 3\}$, we have $s_i \tequiv t_i$, where $s_i$ is in conjugate normal form, and $t_i$ is in Cartesian normal form.
\end{example}

We call $t \in \TTT_\z$ a \emph{real term} if $t(\z)^*[\CC^n] \subseteq \RR$. This is equivalent to the requirement that $t(\z')^*[\CC^m] \subseteq \RR$ for all extended terms $t(\z')$ with $\z' \in \VVV^m$; compare Remark~\ref{REM:universal}.

\begin{proposition}[Characterization of real terms]\label{PR:real_terms}
    Let $t \in \TTT_\z$ with $\z \in \VVV^n$ for $n \in \NN$. Then the following are equivalent:
    \begin{enumerate}[label=(\roman*), leftmargin=*]
        \item $t$ is a real term;
        \item $\im(t) \tequiv 0$;
        \item the Cartesian normal form of $t$ is a polynomial in $\QQ[\re(\z), \im(\z)]$.
    \end{enumerate}
\end{proposition}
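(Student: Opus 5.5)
We prove the cycle of implications (i)$\Rightarrow$(ii)$\Rightarrow$(iii)$\Rightarrow$(i). Throughout, we work with the Cartesian normal form $p_1 + \ii \cdot p_2$ of $t$ from Proposition~\ref{PR:cartnf}, where $p_1, p_2 \in \QQ[\re(\z), \im(\z)]$. The key observation is that for every $\z_0 \in \CC^n$ the values $p_1(\z)^*(\z_0)$ and $p_2(\z)^*(\z_0)$ are real. Indeed, $p_1$ and $p_2$ are built from rational constants and the real numbers $\re(z_{0,i})$, $\im(z_{0,i})$ using only ring operations. Hence $p_1(\z)^*$ and $p_2(\z)^*$ are pointwise the real and imaginary parts of $t(\z)^*$.

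\emph{(i)$\Rightarrow$(ii).} If $t(\z)^*$ takes only real values, then $\im(t)(\z)^*(\z_0) = \im\bigl(t(\z)^*(\z_0)\bigr) = 0$ for all $\z_0 \in \CC^n$. This is exactly $\im(t) \tequiv 0$.

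\emph{(ii)$\Rightarrow$(iii).} By the observation above, $\im(t)(\z)^* = p_2(\z)^*$. So (ii) says that the polynomial function of $p_2$ vanishes on all of $\CC^n$. Consequently it vanishes on all real arguments $(\re(\z_0), \im(\z_0)) \in \RR^{2n}$. A polynomial over $\QQ$ in $2n$ indeterminates that vanishes on $\RR^{2n}$ is the zero polynomial, since $\RR$ is infinite. Alternatively, uniqueness in Proposition~\ref{PR:cartnf} gives $p_1 + \ii \cdot p_2 \tequiv p_1 + \ii \cdot 0$, so $p_2 = 0$ formally. Since summands $0$ are omitted, the Cartesian normal form of $t$ is then just $p_1 \in \QQ[\re(\z), \im(\z)]$.

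\emph{(iii)$\Rightarrow$(i).} If the normal form is $p_1 \in \QQ[\re(\z), \im(\z)]$, then $t(\z)^* = p_1(\z)^*$. By the observation, this function is real-valued everywhere.

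The only step requiring care is (ii)$\Rightarrow$(iii). There one must pass from functional vanishing to formal vanishing of $p_2$. This is either the identity theorem for real polynomials or the uniqueness part of Proposition~\ref{PR:cartnf}, applied after observing that $0 \in \NNNR_\z$ corresponds to $p_2 = 0$.
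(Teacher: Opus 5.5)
Your proof is correct and follows essentially the same route as the paper: the same cycle of implications, with (ii)$\Rightarrow$(iii) resting on the fact that a nonzero polynomial in $\QQ[\re(\z), \im(\z)]$ does not vanish at every point $(\re(\cc), \im(\cc))$ with $\cc \in \CC^n$. The paper uses that fact implicitly in its contradiction argument, whereas you justify it explicitly via $\RR$ being infinite, or alternatively via the uniqueness part of Proposition~\ref{PR:cartnf}.
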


\begin{proof}
    We denote by $\re^*: \CC \to \CC$ the real part, by $\im^*: \CC \to \CC$ the imaginary part, and by $\ii^* \in \CC$ the imaginary unit.

    Assume (iii), that is $t \tequiv t'$ with $t' \in \QQ[\re(\z), \im(\z)]$, and let $\cc \in \CC^n$. Since $\re^*[\CC] = \RR$, $\im^*[\CC] = \RR$, and $\RR \subseteq \CC$ is closed under ring arithmetic, we obtain $t(\z)^*(\cc) = t'(\z)^*(\cc) \in \RR$. We have thus shown (i).

    Assume (i), that is $t(\z)^*(\cc) \in \RR$, and let $\cc \in \CC^n$. It follows that
    \begin{equation*}
        \im(t)(\z)^*(\cc) = \im^*(t(\z)^*(\cc)) = 0 = 0(\z)^*(\cc),
    \end{equation*}
    and we have shown (ii).

    Assume (ii), and let $f$, $g \in \QQ[\re(\z), \im(\z)]$ be the unique choices such that $t \tequiv f + \ii \cdot g$. Using the same arguments as in our proof step from (iii) to (i) above, we conclude that $f(\z)^*(\cc) \in \RR$ and $g(\z)^*(\cc) \in \RR$ for all $\cc \in \CC^n$. Assume for a contradiction that $g \neq 0$. Then there exists $\cc \in \CC^n$ such that $g(\z)^*(\cc) \neq 0$, and we obtain
    \begin{equation*}
        \im(t)(\z)^*(\cc) = \im^*(t(\z)^*(\cc)) = \im^*(f(\z)^*(\cc) + \ii^*g(\z)^*(\cc)) = g(\z)^*(\cc) \neq 0(\z)^*(\cc),
    \end{equation*}
    which contradicts (ii). Hence $g = 0$, and we have shown (iii).
\end{proof}

Note that property (ii) in the previous lemma can be checked via normal form computation on $t$, using any system of unique normal forms of terms.

\section{Atoms and First-order Formulas}\label{SE:formulas}

Given $t_1$, $t_2$, $r_1$, $r_2 \in \TTT$, where $r_1$, $r_2$ are real terms, \emph{atomic formulas}, or \emph{atoms} for short, are of one of the following forms:
\begin{equation*}
    t_1 = t_2, \quad t_1 \neq t_2, \quad r_1 \leq r_2, \quad r_1 < r_2,
\end{equation*}
where, as usual in interpreted first order logic, the equal sign is not an element of the language $\LLL$. This yields the countably infinite set $\AAA$ of all atoms. Note that we permit \sq{$<$} and \sq{$\leq$} to be used only with real terms. Our software will recognize other uses via Proposition~\ref{PR:real_terms} and raise an exception.

As usual, \emph{quantifier-free formulas} are built inductively, starting with atoms and the constant formulas $\top^{(0)}$ (\sq{true}), $\bot^{(0)}$ (\sq{false}), and applying non-constant logical operators
\begin{equation*}
    \lnot^{(1)},\quad \myland^{(*)},\quad \mylor^{(*)},\quad \mylongrightarrow^{(2)},\quad \mylongleftrightarrow^{(2)},
\end{equation*}
where $\myland$ and $\mylor$ have arbitrary arity. This yields the countably infinite set $\QQQ$ all quantifier-free formulas. We have countably infinitely many \emph{existential} and \emph{universal quantifiers}
\begin{equation*}
    \exists z^{(1)},\quad \forall z^{(1)}\quad \text{for}\quad z \in \VVV.
\end{equation*}
Prenex \emph{first-order formulas} are built inductively, starting with quantifier-free formulas and applying existential and universal quantifiers. This yields the countably infinite set $\FFF$ of all prenex first-order formulas.

We say that a variable $z \in \VVV$ \emph{occurs} in a formula $\varphi \in \FFF$ if it appears in an atom of $\varphi$. If $z$ additionally occurs in a quantifier in $\varphi$, we say that $z$ is a \emph{bound variable} of $\varphi$, otherwise $z$ is a \emph{free variable} of $\varphi$. Let $n \in \NN$ and $z_1$, \dots,~$z_n \in \VVV$ with $z_1 \prec \dots \prec z_n$, and set $\z = (z_1, \dots, z_n)$. We write $\AAA_\z \subseteq \AAA$ and $\QQQ_\z \subseteq \QQQ$ for the set of all atoms and quantifier-free formulas over variables in $\z$, respectively. We write $\FFF_\z \subseteq \FFF$ for the set of all those first-order formulas $\varphi$ for which $\z$ covers at least the free variables of $\varphi$. A prenex \emph{sentence} is a prenex first-order formula $\vartheta \in \FFF$ without free variables; $\FFF_{()}$ is the set of all sentences.

Consider $\varphi \in \FFF$, and let $\{z_1, \dots, z_n\} \subseteq \VVV$ be a superset of the free variables of $\varphi$. Then $\varphi(z_1, \dots, z_n)$ is called an \emph{extended formula}. Each extended formula $\varphi(\z)$ induces an \emph{interpretation} $\varphi(\z)^*: \CC^n \to \{0, 1\}$, which is the characteristic function of a relation, inductively defined for $\cc \in \CC^n$ as follows:
\begin{enumerate}
    \item $(t_1 = t_2)(\z)^*(\cc) = 1$ iff $t_1(\z)^*(\cc) = t_2(\z)^*(\cc)$,
    \item $(t_1 \neq t_2)(\z)^*(\cc) = 1$ iff $t_1(\z)^*(\cc) \neq t_2(\z)^*(\cc)$,
    \item $(t_1 < t_2)(\z)^*(\cc) = 1$ iff $t_1(\z)^*(\cc) < t_2(\z)^*(\cc)$, where $t_1(\z)^*(\cc)$, ${t_2(\z)^*(\cc) \in \RR}$,
    \item $(t_1 \leq t_2)(\z)^*(\cc) = 1$ iff $t_1(\z)^*(\cc) \leq t_2(\z)^*(\cc)$, where $t_1(\z)^*(\cc)$, ${t_2(\z)^*(\cc) \in \RR}$,
    \item $\top(\z)^*(\cc) = 1$,
    \item $\bot(\z)^*(\cc) = 0$,
    \item $(\lnot\varphi)(\z)^*(\cc) = 1$ iff $\varphi(\z)^*(\cc) = 0$,
    \item $(\varphi_1 \myland \dots \myland \varphi_m)(\z)^*(\cc) = \min \{\varphi_1(\z)^*(\cc), \dots, \varphi_m(\z)^*(\cc)\}$,
    \item $(\varphi_1 \mylor \dots \mylor \varphi_m)(\z)^*(\cc) = \max \{\varphi_1(\z)^*(\cc), \dots, \varphi_m(\z)^*(\cc)\}$,
    \item $(\varphi_1 \mylongrightarrow \varphi_2)(\z)^*(\cc) = 1$ iff $\varphi_1(\z)^*(\cc) \leq \varphi_2(\z)^*(\cc)$,
    \item $(\varphi_1 \mylongleftrightarrow \varphi_2)(\z)^*(\cc) = 1$ iff $\varphi_1(\z)^*(\cc) = \varphi_2(\z)^*(\cc)$,
    \item $\exists z(\varphi)(\z)^*(\cc) = \max \{\,\varphi(\z, z)^*(\cc, c) \mid c \in \CC\,\}$, where w.l.o.g.~$z \notin \{z_1, \dots, z_n\}$,\label{PAGE:semantics}
    \item $\forall z(\varphi)(\z)^*(\cc) = \min \{\,\varphi(\z, z)^*(\cc, c) \mid c \in \CC\,\}$, where w.l.o.g.~$z \notin \{z_1, \dots, z_n\}$.
\end{enumerate}
If $\varphi(\z)^*(\cc) = 1$ for specific $\cc \in \CC^n$, we write $\CC \models \varphi(\cc)$, where the extension $\z$ will be clear from the context. If $\varphi(\z)^*(\cc) = 1$ for all $\cc \in \CC^n$, we call $\varphi$ a \emph{valid formula} and write $\CC \models \varphi$, which does not depend on the extension $\z$; compare Remark~\ref{REM:universal}. According to our definition in the previous section, two terms are equivalent, $t_1 \tequiv t_2$, if and only if $\CC \models t_1 = t_2$. We say that two formulas $\varphi_1$ and $\varphi_2$ are \emph{equivalent} if $\CC \models \varphi_1 \mylongleftrightarrow \varphi_2$.

We call $\NNN_\z \subseteq \FFF_\z$ a system of \emph{normal forms} for $\FFF_\z$ if for each $\varphi \in \FFF_\z$ there is $\varphi' \in \NNN_\z$ with $\CC \models \varphi' \mylongleftrightarrow \varphi$; we denote $\varphi'$ by $\NNN_\z(\varphi)$. A first-order formula $\varphi \in \FFF_\z$ is called a \emph{real formula} if both sides of all its atoms are real terms.

\begin{proposition}[Real normal form of formulas]
    Consider $\FFF_\z$ with $\z \in \VVV^n$ for $n \in \NN$. Let $\NNNRR_\z$ be the set of all
    formulas $\varphi \in \FFF_\z$ with the following properties:
    \begin{enumerate}[label=(\roman*), leftmargin=*]
        \item all left-hand sides of atoms in $\varphi$ are polynomials in $\QQ[\re(\z), \im(\z)]$;
        \item all right-hand sides of atoms in $\varphi$ are $0$.
    \end{enumerate}
    Then $\NNNRR_\z$ is a system of normal forms for $\FFF_\z$, and all normal forms in $\NNNRR_\z$ are real formulas.
\end{proposition}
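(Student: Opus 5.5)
We transform $\varphi \in \FFF_\z$ atom by atom, leaving the propositional structure and the quantifier prefix unchanged. Equivalence of formulas is preserved under replacing subformulas by equivalent ones, including under quantifiers, by induction on the semantics. This induction also covers the bound variables, since the semantics of $\exists z$ and $\forall z$ extends the tuple of variables. Hence it suffices to give, for each atom $\alpha$ with variables in some tuple $\z'$ (covering free and bound variables), a quantifier-free formula $\alpha'$ satisfying (i) and (ii) over $\z'$ with $\CC \models \alpha \mylongleftrightarrow \alpha'$. Conditions (i) and (ii) should be read relative to the variables occurring, including bound ones; I would note this interpretation explicitly.

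The atoms are treated as follows.

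\emph{Equations} $t_1 = t_2$. Compute the Cartesian normal form $f + \ii \cdot g$ of $t_1 - t_2$ by Proposition~\ref{PR:cartnf}, where $f, g \in \QQ[\re(\z'), \im(\z')]$. Then for every $\cc$, $t_1^*(\cc) = t_2^*(\cc)$ iff $f^*(\cc) + \ii^* g^*(\cc) = 0$. Since $f^*(\cc)$ and $g^*(\cc)$ are real, by the argument in the proof of Proposition~\ref{PR:real_terms} (iii)$\Rightarrow$(i), this holds iff $f^*(\cc) = 0$ and $g^*(\cc) = 0$. So we replace the atom by $f = 0 \myland g = 0$.

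\emph{Disequations} $t_1 \neq t_2$. These are the negation of the previous case, so we replace them by $f \neq 0 \mylor g \neq 0$.

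\emph{Orderings} $r_1 \leq r_2$ and $r_1 < r_2$. Here $r_1$ and $r_2$ are real terms, so $r_1 - r_2$ is a real term. By Proposition~\ref{PR:real_terms}, its Cartesian normal form is some $h \in \QQ[\re(\z'), \im(\z')]$. Since $r_1^*(\cc)$ and $r_2^*(\cc)$ are real, $r_1^*(\cc) \leq r_2^*(\cc)$ iff $h^*(\cc) \leq 0$, and similarly for $<$. We replace the atom by $h \leq 0$ or $h < 0$, respectively.

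Constant formulas $\top$ and $\bot$ stay unchanged.

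Every left-hand side produced this way is a polynomial over $\QQ$ in the $\re(z_i)$ and $\im(z_i)$, so it is a real term by Proposition~\ref{PR:real_terms}. The right-hand side $0$ is trivially real. Hence the result is a real formula in $\NNNRR_\z$ and is equivalent to $\varphi$.

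The only point needing care is the equation case, where $f + \ii g = 0$ must split into two real equations. This relies on $f$ and $g$ being real-valued, which is exactly the (iii)$\Rightarrow$(i) argument from Proposition~\ref{PR:real_terms}. One minor bookkeeping issue remains: the free-variable set stays within $\z$, so $\varphi' \in \FFF_\z$.
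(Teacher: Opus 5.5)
Your proposal is correct and follows essentially the same route as the paper: you take Cartesian normal forms (normalizing $t_1 - t_2$ directly, which merges the paper's first two steps), split $f + \ii \cdot g = 0$ into $f = 0 \myland g = 0$ and dually for $\neq$, and handle ordering atoms via Proposition~\ref{PR:real_terms}. Your explicit remarks on replacing equivalent subformulas under quantifiers, and on reading condition (i) relative to bound variables as well, fill in points the paper leaves implicit.
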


\begin{proof}
    Let $\varphi \in \FFF_\z$. We transform $\varphi$ in three subsequent steps as follows:
    \begin{enumerate}
        \item Equivalently transform all left-hand side and right-hand side terms of atoms in $\varphi$ into Cartesian normal form according to Proposition~\ref{PR:cartnf}.
        Note that within inequalities $r_1 \leq r_2$ and $r_1 < r_2$, where $r_1$, $r_2$ are real terms, we obtain normal forms in $\QQ[\re(\z), \im(\z)]$.
        This yields $\varphi'$ with $\CC \models \varphi' \mylongleftrightarrow \varphi$.
        \item Equivalently rewrite all atoms in $\varphi'$ as follows, preserving Cartesian normal form:
        \begin{equation*}
            f_1 + i \cdot g_1 \mathrel{\rho} f_2 + \ii \cdot g_2 \to (f_1 - f_2) + \ii \cdot (g_1 - g_2) \mathrel{\rho} 0,\quad \rho \in \{=, \neq, \leq, <\}.
        \end{equation*}
        This yields $\varphi''$ with $\CC \models \varphi'' \mylongleftrightarrow \varphi$. All right-hand sides of atoms in $\varphi''$ are $0$, and all left-hand sides of inequalities in $\varphi''$ are still polynomials in $\QQ[\re(\z), \im(\z)]$.
        \item Equivalently rewrite equations and disequalities in $\varphi''$ as follows:
        \begin{equation*}
            f + \ii \cdot g = 0 \to {f = 0} \mathrel{\myland} {g = 0},\quad
            f + \ii \cdot g \neq 0 \to {f \neq 0} \mylor {g \neq 0}.
        \end{equation*}
        This yields $\varphi'''$ with $\CC \models \varphi''' \mylongleftrightarrow \varphi$, and $\varphi'''$ has the properties (i) and (ii) specified in the lemma.\qedhere
    \end{enumerate}
\end{proof}

\section{Quantifier Elimination, Completeness,\\ and Decidability}\label{SE:qe}

\begin{theorem}[Pseudo-complex quantifier elimination]\label{TH:qe}
Let $\varphi \in \FFF_\z$ and assume without loss of generality that $\varphi$ is in real normal form. Apply the following three steps to compute from $\varphi$ a quantifier free formula $\varphi' \in \QQQ_\z$.
\begin{enumerate}
    \item \textbf{Purification.} Replace all subterms $\re(z_i)$ and $\im(z_i)$ in $\varphi$ with auxiliary variables $z^\re_i$ and $z^\im_i$, respectively. Furthermore, replace all quantifiers $\exists z_i$ and $\forall z_i$ in $\varphi$ with $\exists z^\re_i\exists z^\im_i$ and $\forall z^\re_i\forall z^\im_i$, respectively. This yields a first-order formula $\psi$ in the language of ordered rings.
    \item \textbf{Real quantifier elimination.} Apply \emph{any} real \qe procedure to obtain a quantifier-free formula $\psi'$ in the language of ordered rings such that $\RR \models \psi \mylongleftrightarrow \psi'$.
    \item \textbf{Substitution.} Remove the auxiliary variables via substitution of the original terms. This yields $\varphi' \in \QQQ_\z$ where $\varphi' = \psi'[z^\re_i \gets \re(z_i), z^\im_i \gets \im(z_i)]$.
\end{enumerate}
Then $\CC \models \varphi \mylongleftrightarrow \varphi'$. In other words, our three steps establish a quantifier elimination procedure for the complex numbers in our language $\LLL$.
\end{theorem}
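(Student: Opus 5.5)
The plan is to compare $\varphi$ and $\psi$ through the bijection $\beta:\CC^n\to\RR^{2n}$, $\beta(\cc)=(\re^*(c_1),\dots,\re^*(c_n),\im^*(c_1),\dots,\im^*(c_n))$, and show that they define the same relation up to $\beta$. Concretely, the central claim is this. For every $\varphi\in\FFF_\z$ in real normal form, let $\psi$ be its purification with free variables among $z^\re_1,\dots,z^\re_n,z^\im_1,\dots,z^\im_n$. Then for all $\cc\in\CC^n$,
\begin{equation*}
    \CC\models\varphi(\cc) \iff \RR\models\psi(\beta(\cc)).
\end{equation*}
Once this claim holds, the theorem follows quickly. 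Set $\varphi'=\psi'[z^\re_i\gets\re(z_i),z^\im_i\gets\im(z_i)]$. The formula $\varphi'$ is again in real normal form, and its purification is exactly $\psi'$, because substitution and purification are mutually inverse on quantifier-free real formulas. Applying the claim to $\varphi'$ and using $\RR\models\psi\mylongleftrightarrow\psi'$ gives, for every $\cc$, the chain $\CC\models\varphi(\cc)$ iff $\RR\models\psi(\beta(\cc))$ iff $\RR\models\psi'(\beta(\cc))$ iff $\CC\models\varphi'(\cc)$. Hence $\CC\models\varphi\mylongleftrightarrow\varphi'$.

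I will prove the claim by structural induction on $\varphi$, generalized to arbitrary extensions $\z$. Generalizing is necessary because the quantifier step enlarges the list of free variables; Remark~\ref{REM:universal} and its analogue for formulas justify working with arbitrary extensions.

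\textbf{Atoms.} An atom has the form $f\mathrel{\rho}0$ with $f\in\QQ[\re(\z),\im(\z)]$ and $\rho\in\{=,\neq,\leq,<\}$. Its purification is $\hat f\mathrel{\rho}0$, where $\hat f\in\QQ[z^\re,z^\im]$ is the same polynomial with the generators renamed. The interpretation of $\re(z_i)$ at $\cc$ is the real number $\re^*(c_i)$, and similarly for $\im(z_i)$. Evaluating the polynomial expression therefore gives $f(\z)^*(\cc)=\hat f^{\RR}(\beta(\cc))$. This is a real number, computed by the same ring operations in $\RR\subseteq\CC$, so the order comparisons coincide; this is where the real normal form, with its real left-hand sides, is used. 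The cases $\top$, $\bot$ and the Boolean connectives follow directly from the inductive definition of the semantics.

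\textbf{Quantifiers.} Take $\exists z(\varphi_0)$ with $z\notin\z$. Its purification is $\exists z^\re\exists z^\im\psi_0$. By the induction hypothesis for the extension $(\z,z)$, $\CC\models\varphi_0(\cc,c)$ holds iff $\RR\models\psi_0(\beta(\cc),\re^*(c),\im^*(c))$. The map $c\mapsto(\re^*(c),\im^*(c))$ is a bijection $\CC\to\RR^2$, so taking the maximum over $c\in\CC$ equals taking the maximum over pairs in $\RR^2$. The case $\forall$ is the same with minimum in place of maximum.

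The main obstacle is bookkeeping rather than mathematics. Variable extensions and orderings must be handled consistently in the induction, and the auxiliary variables $z^\re_i,z^\im_i$ must be fresh and distinct for all original variables, including bound ones, so that purification is injective and substitution exactly undoes it. A secondary point is that $\psi'$ has atoms only in the language of ordered rings with rational coefficients, so after substitution every term in $\varphi'$ is real. This guarantees that $\varphi'$ is a well-formed element of $\QQQ_\z$ and is in real normal form, possibly after moving everything to the left-hand side, which does not change its purification up to equivalence.
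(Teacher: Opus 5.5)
Your proposal is correct and follows essentially the paper's proof. Both arguments use an induction showing $\varphi(\z)^*(\cc) = \psi(\z^\re,\z^\im)^*(\re^*(\cc),\im^*(\cc))$; the quantifier step uses the bijection $c \mapsto (\re^*(c),\im^*(c))$ from $\CC$ onto $\RR^2$ to turn $\max$/$\min$ over $\CC$ into $\max$/$\min$ over $\RR^2$, and equalities through $\psi$ and $\psi'$ then reach $\varphi'$. The differences are cosmetic: you induct structurally rather than on the number of prenex quantifiers, and you make explicit the atom case and the final substitution step (by reapplying the claim to $\varphi'$), which the paper treats as immediate.
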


\begin{proof}
    Consider extended formulas $\varphi(\z)$, $\varphi'(\z)$ and $\psi(\z^\re, \z^\im)$, $\psi'(\z^\re, \z^\im)$, and their interpretations
    \begin{equation*}
        \varphi(\z)^*, \varphi'(\z)^*: \CC^n \to \{0, 1\},\quad
        \psi(\z^\re, \z^\im)^*, \psi'(\z^\re, \z^\im)^*: \RR^{2n} \to \{0, 1\}.
    \end{equation*}

    As an auxiliary lemma, we prove by strong induction on the number $q$ of prenex quantifiers in $\varphi$ that
    \begin{equation}\label{EQ:th1claim}
        \varphi(\z)^*(\cc) = \psi(\z^\re, \z^\im)^*(\re^*(\cc), \im^*(\cc))\quad
        \text{for all}\quad
        \cc \in \CC^n.
    \end{equation}
    For $q=0$, the claim \eqref{EQ:th1claim} is immediate. Assume $q > 0$ and let $\cc \in \CC^n$. Assume without loss of generality that $\varphi = (\exists z \tilde\varphi)$ and $\psi = (\exists z^\re \exists z^\im \tilde\psi)$ with extended formulas $\tilde\varphi(z, \z)$ and $\tilde\psi(z^\re, \z^\re, z^\im, \z^\im)$ and corresponding interpretations on $\CC^{n+1}$ and $\RR^{2n+2}$, respectively. By the induction hypothesis we have
    \begin{equation}\label{EQ:Th1ih}
        \tilde\varphi(z, \z)^*(c, \cc)
        = \tilde\psi(z^\re, \z^\re, z^\im, \z^\im)^*(\re^*(c), \re^*(\cc), \im^*(c), \im^*(\cc)).
    \end{equation}
    Combining \eqref{EQ:Th1ih} with the formal semantics of quantifiers (see the enumeration on p.~\pageref{PAGE:semantics}), we obtain
    \begin{align}\label{EQ:minmax}
        (\exists z \tilde\varphi)(\z)^*(\cc)
        & = \max_{c \in \CC}\bigl(\tilde\varphi(z,\z)^*(c,\cc)\bigr)\notag\\
        & = \max_{c\in \CC}\bigl(
            \tilde\psi(
                z^\re, \z^\re, z^\im, \z^\im)^*(\re^*(c), \re^*(\cc), \im^*(c), \im^*(\cc))\bigr)\notag\\
        & = \max_{r_1, r_2 \in \RR}\bigl(
            \tilde\psi(
                z^\re, \z^\re, z^\im, \z^\im)^*(r_1, \re^*(\cc), r_2, \im^*(\cc))\bigr)\notag\\
        &= (\exists z^\re \exists z^\im \tilde\psi)(\z^\re, \z^\im)^*(\re^*(\cc), \im^*(\cc)).
    \end{align}
    With universal quantifiers $\varphi = (\forall z \tilde\varphi)$ and $\psi = (\forall z^\re \exists z^\im \tilde\psi)$ the same argument holds with $\min$ instead of $\max$ in \eqref{EQ:minmax}. This proves our auxiliary lemma \eqref{EQ:th1claim}. Hence,
    \begin{align*}
        \varphi(\z)^*(\cc)
        & = \psi(\z^\re, \z^\im)^*(\re^*(\cc), \im^*(\cc)) & & \text{purification, using \eqref{EQ:th1claim}}\\
        & = \psi'(\z^\re, \z^\im)^*(\re^*(\cc), \im^*(\cc)) & & \text{real \qe}\\
        & = \varphi'(\z)^*(\cc) & & \text{substitution}
    \end{align*}
    for all $\cc \in \CC^n$.
\end{proof}

\begin{corollary}[Completeness and decidability]
    The theory $\Theta = \{\,\vartheta \in \FFF_{()} \mid \CC \models \vartheta\,\}$ is complete and decidable. Our pseudo-complex quantifier elimination procedure yields a decision procedure for $\Theta$.
\end{corollary}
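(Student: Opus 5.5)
The plan is to reduce both claims to Theorem~\ref{TH:qe} together with the fact that variable-free quantifier-free formulas can be evaluated effectively. Let $\vartheta \in \FFF_{()}$ be a sentence. First I transform $\vartheta$ into real normal form by the preceding proposition; this step is effective because Cartesian normal forms are computable by Proposition~\ref{PR:cartnf}. Then I apply Theorem~\ref{TH:qe} with $n = 0$, that is, with $\z = ()$, and use an effective real \qe procedure such as Tarski's or \cad. This yields a quantifier-free $\vartheta' \in \QQQ_{()}$ with $\CC \models \vartheta \mylongleftrightarrow \vartheta'$.

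Since $\vartheta$ has no free variables, the purified formula $\psi$ has none either. Moreover, the real \qe output $\psi'$ may be assumed to contain no variables beyond the free variables of $\psi$, which is the standard requirement on real \qe. Hence $\psi'$ is a Boolean combination of atoms $p \mathrel{\rho} 0$ with $p$ a variable-free term over $\QQ$. The substitution step therefore leaves $\psi'$ unchanged, and $\vartheta' = \psi'$.

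Each such atom is decided by evaluating $p$ as a rational number and comparing it with $0$, which is exact arithmetic. Evaluating the Boolean combination then gives $\vartheta'$ the value $\top$ or $\bot$. Consequently, for every sentence $\vartheta$, either $\CC \models \vartheta$ or $\CC \models \lnot\vartheta$, and the procedure determines which one holds. This gives decidability of $\Theta$ and shows that our procedure decides it.

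Completeness follows because $\Theta$ is the theory of a single structure. For every sentence $\vartheta$, exactly one of $\vartheta$ and $\lnot\vartheta$ lies in $\Theta$. Here I note that $\lnot\vartheta$ is equivalent to a prenex sentence obtained by dualizing the quantifiers, so it can be taken in $\FFF_{()}$. One could also argue via the reduction: purification maps $\Theta$ into the complete theory of real closed fields, and truth values are preserved by the auxiliary lemma \eqref{EQ:th1claim} with $n=0$.

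The only step needing care is the claim that real \qe introduces no new free variables. If a procedure did introduce such variables, I would note that $\RR \models \psi \mylongleftrightarrow \psi'$ holds for all values of those extra variables. I would then instantiate them with $0$ and still obtain a variable-free equivalent formula.
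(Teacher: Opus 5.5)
Your proof is correct and follows the paper's argument. You apply Theorem~\ref{TH:qe} to a sentence and note that the purified formula is then also a sentence, so real \qe yields a variable-free formula that evaluates to $\top$ or $\bot$ and is left unchanged by substitution; from this you read off both decidability and completeness. Beyond the paper, you make explicit several points it folds into ``obvious simplifications'': the exact evaluation of ground atoms, the possibility of extra free variables in the real \qe output, and the need to re-prenex $\lnot\vartheta$ so that it lies in $\FFF_{()}$.
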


\begin{proof}
    When applying the \qe procedure in the proof of Theorem~\ref{TH:qe} to a sentence $\vartheta \in \FFF_{()}$, the purified formula $\psi$ is a sentence, too. Modulo obvious simplifications, real \qe applied to $\psi$ will yield either $\top$ or $\bot$, which is preserved by the subsequent substitution step. In the former case, $\CC \models \vartheta \mylongleftrightarrow \top$ and thus $\vartheta \in \Theta$; else $\CC \models \vartheta \mylongleftrightarrow \bot$, equivalently $\CC \models \lnot\vartheta \mylongleftrightarrow \top$, and thus $\vartheta \notin \Theta$ and $\lnot\vartheta \in \Theta$.
\end{proof}

\section{Complex Reinterpretation}\label{SE:reinter}

We present a method for simplifying the output formula of our quantifier elimination procedure by combining real atoms of the form $r_1 = 0 \myland r_2 = 0$ into a single atom $r_1 + \ii \cdot r_2 = 0$. Subsequent conversion of all terms into conjugate normal form can result in a more compact formula. In the special case $\re(z) = 0 \myland \im(z) = 0$, the resulting terms even collapse to a single variable, yielding $z = 0$.

For our presentation here we assume without loss of generality that the input formula of our method is a conjunction of real atoms, and we focus on equations. It is easy to see that a dual variant with disjunctions and disequalities exists.\footnote{In practice, we employ a simplification framework with implicit theories similar to \cite{DolzmannSturm:97b}, which is beyond the scope of this article. Simpler implementations could rely on Boolean normal form computations. Our approach smoothly integrates the conjunctive and disjunctive variants while avoiding exponential blow-up.} Note that ordering inequalities cannot be straightforwardly combined, as they are equivalent to $\bot$ in the presence of non-real terms. Moreover, disequalities cannot be directly combined within a conjunction.

We are thus given a set $R = \{r_1 = 0, \ldots, r_n = 0\}$ of equations, where each $r_i$ is a real term, and our goal is to find a pairing of elements of $R$ that minimizes the total size of the resulting merged equations together with the remaining unpaired equations. This can be viewed as a variant of the minimum edge cover problem in the following sense.

\begin{definition}[Minimum cost partial edge cover]\label{DE:mcpec}
Let $G = (V, E)$ be an undirected graph, $c_V \colon V \to \NN$ a cost function on the vertices, and $c_E \colon E \to \NN$ a cost function on the edges. For any subset $S \subseteq E$, we denote by $V(S) \subseteq V$ the set of all endpoints of edges in $S$. A \emph{minimum cost partial edge cover (\mcpec)} is a subset $S \subseteq E$ that minimizes $c(S) = \sum_{v \in V \setminus V(S)} c_V(v) + \sum_{e \in S} c_E(e)$.
\end{definition}

We fix a system of normal forms $\NNN_\z$ for atomic formulas $\AAA_\z$ and denote by $|\alpha|$ the word length of $\alpha \in \AAA_\z$. We choose $V = R$ with a cost function $c_V(r_i = 0) = |\NNN_\z(r_i = 0)|$, and $E = \{\,\{v, w\} \mid v, w \in V, v \neq w\,\}$ with a cost function
\begin{equation}\label{EQ:minimize}
    c_E(\{r_i = 0, r_j = 0\})
    = \min \{ |\NNN(r_i + \ii \cdot r_j = 0)|, |\NNN(r_j + \ii \cdot r_i = 0)| \}.
\end{equation}
For any edge $e \in E$, we denote by $\mu(e)$ the equation that minimizes \eqref{EQ:minimize}. The following proposition shows how to use a solution $S \subseteq E$ of the \mcpec problem to construct a formula that is equivalent to $\bigwedge R$.

\begin{proposition}
Let $G = (V, E)$ be the graph constructed from a set of equations $R$ and let $S \subseteq E$.
Then $\CC \models \bigwedge R \mylongleftrightarrow \bigwedge \bigl(V \setminus V(S)\bigr) \myland \bigwedge_{e \in S} \mu(e)$.
\end{proposition}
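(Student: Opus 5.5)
We will verify the claimed equivalence pointwise. Fix $\cc \in \CC^n$ and argue that both sides evaluate to the same truth value. The argument rests on one elementary observation about real terms. If $r_i$ and $r_j$ are real terms, then for every $\cc$ both $r_i(\z)^*(\cc)$ and $r_j(\z)^*(\cc)$ lie in $\RR$. For real numbers $a, b$ we have $a + \ii^* b = 0$ if and only if $a = 0$ and $b = 0$, by uniqueness of the decomposition of a complex number into real and imaginary parts. Hence
\begin{equation*}
    \CC \models (r_i = 0 \myland r_j = 0) \mylongleftrightarrow (r_i + \ii \cdot r_j = 0),
\end{equation*}
and symmetrically for $r_j + \ii \cdot r_i = 0$.

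Next we account for the normal forms. By definition of a system of normal forms, each $\NNN_\z(\alpha)$ is equivalent to $\alpha$. The equation $\mu(e)$ for $e = \{r_i = 0, r_j = 0\}$ is, by construction, one of $\NNN(r_i + \ii \cdot r_j = 0)$ or $\NNN(r_j + \ii \cdot r_i = 0)$, namely the one minimizing \eqref{EQ:minimize}. Which of the two is chosen, or how ties are broken, does not matter, since both are equivalent to $r_i = 0 \myland r_j = 0$. Therefore $\CC \models \mu(e) \mylongleftrightarrow \bigwedge e$ for every $e \in S$.

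Finally we assemble the conjunction. Every element of $R = V$ lies either in $V \setminus V(S)$ or in some edge of $S$. Conversely, every conjunct on the right-hand side is either an element of $R$ or equivalent to a conjunction of two elements of $R$. Substituting $\bigwedge e$ for each $\mu(e)$, the right-hand side becomes equivalent to $\bigwedge \bigl(V \setminus V(S)\bigr) \myland \bigwedge_{e \in S} \bigwedge e$. This is a conjunction of exactly the atoms of $R$, possibly with repetitions. Repetitions can occur because the statement allows an arbitrary $S \subseteq E$, not necessarily a matching, so edges in $S$ may share vertices. Since conjunction is idempotent, commutative and associative under the semantics (the $\min$ in the definition of $\myland$), this conjunction is equivalent to $\bigwedge R$.

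The only point requiring care is the first step. There we must use that the $r_i$ are real terms, so that the real-valuedness needed to split $a + \ii^* b = 0$ holds at every $\cc$; otherwise the merge would be unsound. Everything else is bookkeeping.
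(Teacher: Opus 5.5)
Your proposal is correct and follows the paper's proof. In both, idempotence of $\myland$ absorbs equations covered by several edges of $S$, and $\CC \models \bigwedge e \mylongleftrightarrow \mu(e)$ holds for each $e \in S$; the only difference is that you justify this last equivalence explicitly (real-valuedness of $r_i$, $r_j$ plus the defining property of normal forms), where the paper simply says ``by construction''.
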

\begin{proof}
If an equation $r_i = 0$ is covered by multiple edges in $S$, we use idempotence to replace it with $r_i = 0 \myland \dots \myland r_i = 0$ on the left-hand side of the equivalence. Then the statement follows directly since $\CC \models r_i = 0 \myland r_j = 0 \mylongleftrightarrow \mu(e)$ for all $\{r_i = 0, r_j = 0\} \in S$ by construction.
\end{proof}

\begin{example}
    Fix a system of normal forms $\NNN_\z$ of $\AAA_\z$ such that all atoms are of the form $t \mathrel{\rho} 0$ where either $t = 0$ or $t \in \QQ[\ii][\z,\cj{\z}]$ with leading coefficient $1$.
Consider the formula
\begin{equation}\label{EQ:exmcpecin}
    \re(x) = 0 \myland \im(x) = 0 \myland \re(y) = 0 \myland \im(y) > 0.
\end{equation}
{\samepage
We have $(V, E)$ with costs $c_V$, $c_E$ as follows:
\begin{center}
\smallskip
\begin{tikzpicture}[scale=1, transform shape]
\tikzset{vertex/.style={rectangle, draw, minimum size=1.5em}}
\node[vertex, label=below:{$6$}] (A) at (0,0)  {$\re(x) = 0$};
\node[vertex, label=below:{$6$}] (B) at (3,0.75) {$\im(x) = 0$};
\node[vertex, label=below:{$6$}] (C) at (6,0) {$\re(y) = 0$};
\draw (A) to[bend left=10] node[midway, above] {$3$}  (B);
\draw (B) to[bend left=10] node[midway, above] {$10$} (C);
\draw (C) to[bend left=20] node[midway, above] {$13$} (A);
\end{tikzpicture}
\end{center}}
For instance, $c_V(\re(x) = 0) = |x + \cj{x} = 0| = 6$ and
\begin{equation*}
    c_E(\{\re(x) = 0, \im(x) = 0\})  = \min \{ |x = 0|, |\cj{x} = 0| \} = 3.
\end{equation*}
The \mcpec is given by $S = \bigl\{\{\re(x) = 0, \im(x) = 0\}\bigr\}$ corresponding to the merged equation $x = 0$ with cost $3$, and the remaining unpaired equation $\re(y) = 0$ with cost $6$. This yields the following equivalent of \eqref{EQ:exmcpecin}:
\begin{equation*}
    x = 0 \myland y + \cj{y} = 0 \myland \ii \cj{y} - \ii y > 0.
\end{equation*}
\end{example}

The \mcpec problem can be efficiently solved using ideas from combinatorial optimization, specifically reduction to the maximum weight matching problem \cite{Schrijver:02}. We denote by $E(v)$ the set of all edges incident to a vertex $v \in V$. A subset $M \subseteq E$ is a \emph{matching} if no two edges in $M$ share a common endpoint. A \emph{maximum weight matching} is a matching $M$ that maximizes $c_E(M) = \sum_{e \in M} c_E(e)$.

\begin{lemma}[Complexity of the \mcpec problem with low vertex costs]
Consider an instance of the \mcpec problem for a graph $G = (V, E)$ with cost functions $c_V$ and $c_E$. Assume that for all vertices $v \in V$ and edges $e \in E(v)$, the cost functions satisfy $c_V(v) \leq c_E(e)$. Then an \mcpec can be computed in time $O(n \cdot (m + n \log n))$, where $n = |V|$ and $m = |E|$ are the numbers of vertices and edges in $G$, respectively.
\end{lemma}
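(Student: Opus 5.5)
The plan is to reduce the \mcpec problem to the maximum weight matching problem in general graphs. Then we invoke the known $O(n \cdot (m + n \log n))$ bound for weighted matching (Gabow's implementation of Edmonds' algorithm, as presented in \cite{Schrijver:02}). Two ingredients are needed. First, under the hypothesis $c_V(v) \leq c_E(e)$ for all $e \in E(v)$, some optimal solution is a matching. Second, on matchings the \mcpec objective is an affine transformation of a matching weight.

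\textbf{Step 1: an optimal solution that is a matching.} Let $S \subseteq E$ be arbitrary. We modify $S$ so that $c(S)$ never increases, until $S$ is a matching.
\begin{enumerate}
    \item Suppose some edge $e \in S$ has both endpoints covered by other edges of $S$. Removing $e$ leaves $V(S)$ unchanged and decreases $c(S)$ by $c_E(e) \geq 0$.
    \item After repeating (1), every edge of $S$ has an endpoint of degree one in $(V(S), S)$. Hence $S$ is a disjoint union of stars.
    \item Consider a star with center $v$ and leaves $u_1, \dots, u_k$, where $k \geq 2$. Removing the edge $e_j = \{v, u_j\}$ for $j \geq 2$ uncovers only $u_j$. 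It changes the cost by $c_V(u_j) - c_E(e_j) \leq 0$, by the hypothesis applied at the vertex $u_j$.
\end{enumerate}
Repeating these removals yields a matching $M$ with $c(M) \leq c(S)$. Consequently, the minimum of $c$ over all of $E$'s subsets equals its minimum over matchings. This is the step where the low vertex cost assumption is essential, and I expect it to be the main point requiring care. In particular, one must check that the removals in (1) and (3) terminate, which holds because $|S|$ strictly decreases with each removal, and that they always reach a matching.

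\textbf{Step 2: rewriting the objective.} For a matching $M$, the covered vertex set $V(M)$ is the disjoint union of the edges' endpoint pairs. Hence
\begin{equation*}
    c(M) = \sum_{v \in V} c_V(v) - \sum_{\{u,v\} \in M} \bigl(c_V(u) + c_V(v) - c_E(\{u,v\})\bigr).
\end{equation*}
Define $w(\{u,v\}) = c_V(u) + c_V(v) - c_E(\{u,v\})$. Minimizing $c$ over matchings is then equivalent to finding a maximum $w$-weight matching. Edges with $w(e) \leq 0$ can be deleted beforehand, since a maximum weight matching never needs them. This also ensures nonnegative weights, as required by standard algorithms.

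\textbf{Step 3: complexity.} Computing $w$ and discarding nonpositive edges takes $O(m)$ time. A maximum weight matching in a general graph with $n$ vertices and at most $m$ edges is computed in time $O(n(m + n\log n))$ by Gabow's algorithm. Evaluating $c(M)$ and returning $M$ as the \mcpec costs $O(n + m)$. Every term is dominated by $O(n \cdot (m + n \log n))$, which proves the lemma.
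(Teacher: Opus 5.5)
Your proposal is correct and follows the paper's proof. Both arguments reduce to maximum weight matching with edge weights $c_V(u) + c_V(v) - c_E(\{u,v\})$, use the hypothesis to show that some optimal partial edge cover is a matching, rewrite the objective on matchings as $\sum_{v \in V} c_V(v)$ minus the matching weight, and invoke Gabow's $O(n(m + n\log n))$ bound. The only difference is that the paper does your Step~1 in a single move: if $e_1, e_2 \in S$ share an endpoint, drop $e_1$, which changes the cost by at most $c_V(u) - c_E(e_1) \leq 0$ for the other endpoint $u$ of $e_1$. Your star decomposition is therefore valid but not needed.
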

\begin{proof}
We construct a new graph $G' = (V', E')$ with cost function $c' \colon E' \to \ZZ$ as follows. Set $V' = V$, $E' = E$, and for each edge $e \in E$ with endpoints $v$ and $w$, define $c'(e) = c_V(v) + c_V(w) - c_E(e)$. We claim that every maximum weight matching $M \subseteq E$ in $G'$ is an \mcpec in $G$ and can thus be computed in time $O\bigl(n \cdot (m + n \log n)\bigr)$ \cite{Gabow:18}.

To prove the claim, we first consider a partial edge cover $S \subseteq E$ in $G$ and show that it can be transformed into a matching without increasing its partial edge cover cost in $G$. Assume $S$ is not a matching. Then there exist two edges $e_1$, $e_2 \in S$ that share a common endpoint $v \in V$. Denote by $w$ the other endpoint of $e_1$. Since $c_V(w) \leq c_E(e_1)$ by assumption, we can remove $e_1$ from $S$ while not increasing the total cost of $S$. Repeating this process until no two edges in $S$ share a common endpoint results in a matching.

We come back to our initial claim that every maximum weight matching $M$ in $G'$ is an \mcpec in $G$. Assume for a contradiction that there exists a partial edge cover $S$ in $G$ with $c(S) < c(M)$. By the first part of our proof, we can assume without loss of generality that $S$ is a matching. Therefore, we have
\begin{align*}
    c(S)
    &= \sum_{v \in V \setminus V(S)} c_V(v) + \sum_{e \in S} c_E(e)
    = \sum_{v \in V} c_V(v) - \sum_{v \in V(S)} c_V(v) + \sum_{e \in S} c_E(e) \\
    &= \sum_{v \in V} c_V(v) - \sum_{e \in S} c'(e)
    = \sum_{v \in V} c_V(v) - c'(S)
\end{align*}
and, similarly, $c(M) = \sum_{v \in V} c_V(v) - c'(M)$. Since $c(S) < c(M)$, this implies $c'(S) > c'(M)$, which contradicts the assumption that $M$ is a maximum weight matching in $G'$.
\end{proof}

We finally show that complexity of the \mcpec problem in the general case can be reduced to the special case considered in the previous lemma.

\begin{proposition}[Complexity of the \mcpec problem]
The \mcpec problem for a graph $G = (V, E)$ with cost functions $c_V$ and $c_E$ can be solved in time $O\bigl(n \cdot (m + n \log n)\bigr)$, where $n = |V|$ and $m = |E|$ are the number of vertices and edges in $G$.
\end{proposition}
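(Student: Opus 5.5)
We reduce the general \mcpec instance to one satisfying the hypothesis of the previous lemma, in linear time in the size of $G$, and then invoke that lemma. The key observation is that an edge $e = \{v, w\}$ is only ever useful in an optimal partial edge cover if its cost does not exceed what we would pay by leaving its endpoints uncovered. Accordingly, we replace the edge costs by
\begin{equation*}
    c'_E(\{v, w\}) = \min\bigl\{ c_E(\{v, w\}),\ c_V(v) + c_V(w) \bigr\}
\end{equation*}
and the vertex costs by
\begin{equation*}
    c'_V(v) = \min\bigl( \{c_V(v)\} \cup \{\, c_E(e) \mid e \in E(v) \,\} \bigr).
\end{equation*}
Both can be computed in $O(n + m)$ time by a single pass over the edges. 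By construction, $c'_V(v) \leq c_E(e)$ for all $e \in E(v)$. Some care is still needed to make sure that $c'_V(v) \leq c'_E(e)$, i.e., also against the truncated edge costs; I would handle this by applying the edge truncation first and then defining $c'_V$ relative to $c'_E$. The inequality $c'_V(v) \leq c'_E(e)$ then holds trivially.

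The main step is to show that the two instances have the same optimal value, and that an optimal solution of the modified instance maps back to an optimal solution of the original in polynomial (indeed linear) time. The interpretation is the following. Lowering $c'_V(v)$ to $c_E(e)$ for some $e \in E(v)$ means ``leave $v$ uncovered in the modified instance'' but corresponds in $G$ to covering $v$ by the cheapest edge $e_v$ incident to $v$, which may cover another vertex as a by-product. Truncating $c_E(e)$ to $c_V(v) + c_V(w)$ corresponds to not using $e$ at all.

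Given an optimal $S'$ for $(c'_V, c'_E)$, we build $S$ in $G$ as follows:
\begin{itemize}[label={}, leftmargin=*]
    \item keep every $e \in S'$ with $c'_E(e) = c_E(e)$, and drop the truncated ones, leaving their endpoints uncovered at cost $c_V$ or, where $c'_V < c_V$, adding $e_v$;
    \item for each uncovered $v$ with $c'_V(v) < c_V(v)$, add the edge $e_v$.
\end{itemize}
The cost of $S$ is at most $c'(S')$, because extra coverage never costs more under the definition of $c$ (covering a vertex twice only removes a vertex-cost term). Conversely, every $S$ for $G$ satisfies $c'(S) \leq c(S)$, since all costs only decreased. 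Hence $\mathrm{opt} \leq c(S) \leq c'(S') = \mathrm{opt}' \leq \mathrm{opt}$, so $S$ is optimal.

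The main obstacle is this cost-accounting inequality $c(S) \leq c'(S')$, in particular when two modifications interact, for instance when $e_v = e_w$ or when $e_v$ is itself in $S'$. I would verify it by charging each vertex or edge term of $c'(S')$ to at most one term of $c(S)$. Finally, the previous lemma solves the modified instance in $O(n(m + n\log n))$ time, and the reduction and reconstruction take $O(n + m)$ time, which gives the claimed bound.
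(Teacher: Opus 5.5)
Your proposal is correct and is essentially the paper's proof. The paper likewise lowers the vertex costs to $c'_V(v) = \min\{c_V(v), c_E(e_v)\}$, solves the modified instance with the previous lemma, adds $e_v$ for every vertex left uncovered by $S'$ with $c_E(e_v) < c_V(v)$, and concludes via $c(S) \leq c'(S') \leq c'(T) \leq c(T)$. The only difference is your truncation of $c_E$, which is unnecessary: since $c_V \geq 0$, a truncated edge satisfies $c'_E(e) \geq c_V(v)$, so your $c'_V$ coincides with the paper's, and keeping $c'_E = c_E$ already meets the lemma's hypothesis and spares you the step of dropping truncated edges.
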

\begin{proof}
We construct a new graph $G' = (V', E')$ with $V' = V$, $E' = E$, $c_E' = c_E$, and a new vertex cost function $c_V' \colon V' \to \NN$, $c_V'(v) = \min \{ c_V(v), c_E(e_v) \}$, where $e_v \in E$ is an edge with minimum cost among all edges with endpoint $v$. By construction, $G'$ satisfies the assumptions of the previous lemma, and an \mcpec $S'$ in $G'$ can be computed in time $O\bigl(n \cdot (m + n \log n)\bigr)$. Our final \mcpec in $G$ is then given $S = S' \cup S''$ with $S'' = \{ e_v \mid v \in V \setminus V(S'), \, c_E(e_v) < c_V(v) \}$. It remains to show that $S$ is indeed an \mcpec in $G$. First, note that $S'$ and $S''$ are disjoint, so that
\begin{equation*}
    c(S) = \sum_{v \in V \setminus V(S)} c_V(v) + \sum_{e \in S'} c_E(e)
    + \sum_{e \in S''} c_E(e).
\end{equation*}
By definition of $S''$ and $c_V'$, we have $S'' \subseteq V \setminus V(S') \cap V(S'')$ and
\begin{equation*}
    \sum_{e \in S''} c_E(e) = \sum_{e_v \in S''} c_V'(v) \leq \sum_{v \in V \setminus V(S') \cap V(S'')} c'_V(v).
\end{equation*}
Moreover, $V \setminus V(S) = V \setminus V(S') \cap V \setminus V(S'')$
and $c_V(v) = c_V'(v)$ for all $v \in V \setminus V(S)$ by the definition of $S''$ and $c_V'$. Therefore,
\begin{equation*}
    \sum_{v \in V \setminus V(S)} c_V(v) = \sum_{v \in V \setminus V(S') \cap V \setminus V(S'')} c'_V(v).
\end{equation*}
By combining the previous equation and inequality, we obtain
\begin{equation*}
    \sum_{v \in V \setminus V(S)} c_V(v) + \sum_{e \in S''} c_E(e)
    \leq \sum_{v \in V \setminus V(S')} c'_V(v).
\end{equation*}
Using $c'_E = c_E$, we conclude that $c(S) \leq c'(S')$. Since $S'$ is an \mcpec in $G'$, and because $c'(T) \leq c(T)$ for any $T \subseteq E$ by definition of $c'$, it follows that $c(S) \leq c'(S') \leq c'(T) \leq c(T)$ for any partial edge cover $T$ in $G$. Hence, $S$ is an \mcpec in $G$.
\end{proof}

\section{Complexity}\label{SE:complexity}
We summarize the asymptotic complexity of the quantifier elimination problem for the theory $\ACF$ of algebraically closed fields of characteristic zero and the theory $\RCF$ of real closed fields. Furthermore, we examine the complexity of available algorithms, with a particular emphasis on well-supported implementations. On this basis, we address some implications for the framework proposed here.

With respect to word length, the asymptotic worst-case size of the \qe output in both $\ACF$ and $\RCF$ is doubly exponential in the size of the input, and this bound is tight. This yields doubly exponential time complexity \cite{Heintz:83a,DavenportHeintz:88a,Weispfenning:88a}. It is noteworthy that elimination procedures for $\ACF$ admit exponential-space bounds \cite{ChistovGrigoriev1984combined}, and the corresponding \qe algorithms can be implemented within exponential space provided the output is generated incrementally. When the number of alternations between existential and universal quantifiers in the input is bounded, the time complexity for both $\ACF$ and $\RCF$ is singly exponential \cite{10.1007/BFb0030287,DBLP:journals/jsc/Renegar92combined}. The complexity depends primarily on the number of quantifier alternations and the number of quantified variables, which determine the exponent in the standard bounds for \qe in $\ACF$ \cite{10.1007/BFb0030287} and $\RCF$ \cite{Grigoriev:88a,DBLP:journals/jsc/Renegar92combined,BasuPollack:96a}. Asymptotically, these bounds have the same form for $\ACF$ and $\RCF$.

\begin{remark}[Asymptotic complexity of pseudo-complex \qe]
    Asymptotic complexity bounds for real \qe carry over to pseudo-complex \qe. During the purification step in Theorem~\ref{TH:qe}, the input size for real \qe may grow polynomially in the original input size. More specifically, the number of quantifiers in the input doubles, while the number of quantifier alternations remains unchanged. Subsequent complex reinterpretation of the \qe result is polynomial in the size of the result and thus dominated by the complexity bounds for \qe.\qed
\end{remark}

From a practical perspective, \qe in $\ACF$ can be approached via comprehensive Gröbner bases (\textsc{cgb}) \cite{Weispfenning:92a}. The Redlog system \cite{DolzmannSturm:97a} implements this by computing disjunctive normal forms at each quantifier alternation and employing \textsc{cgb} computations with disjoint case distinctions on parameter vanishing, which leads to extremely rapid growth of the resulting quantifier-free formulas. Compared to $\ACF$, \qe in $\RCF$ is supported more extensively across a wide range of software environments, including systems such as Qepcad, Redlog, Maple, and Mathematica. Numerous applications in the sciences and engineering are documented in the literature; see e.g.~\cite{DolzmannSturm:99a,Sturm:17a} and the references therein. For general-purpose applications, the method of choice remains partial cylindrical algebraic decomposition \cite{CollinsHong:91}. This algorithm has doubly exponential complexity in the total number of variables occurring in the input, regardless of the presence of quantifier alternations or even whether a variable is quantified \cite{Brown:2007:CQE:1277548.1277557}.

On the one hand, the strong available software support for real \qe facilitates the integration of our approach into the corresponding software environments. On the other hand, the lack of efficient implementations of complex \qe suggests that our method may be of interest even for complex \qe problems without our additional operations.

\section{Implementation and Examples}\label{SE:implementation}
We have implemented our approach in our Python-based open-source system Logic1.\footnote{\url{https://github.com/logic1-eu/logic1}} For the real \qe part, Logic1 uses by default an implementation of virtual substitution following essentially \cite{Kosta:16a}. Note, that real \qe serves as a black box so that it is not hard to plug in alternative implementations in the future. An interface to Redlog \cite{DolzmannSturm:97a} already exists. Our implementation allows to dynamically switch between conjugate and Cartesian normal forms.

The computation times for all subsequent examples are summarized in Table~\ref{tab:examples-runtime} on p.\pageref{tab:examples-runtime}, which also indicates which of the two available normal forms for terms was chosen in each case.

\begin{example}[Cartesian coordinates]\label{ex:cartesian-coordinates}
The representation of a complex number $z$ in Cartesian coordinates can be expressed by the following formula:
\begin{equation*}
    \varphi =  \forall z \exists x \exists y \bigl(\im(x) = 0 \myland \im(y) = 0 \myland z = x + \ii \cdot y\bigr).
\end{equation*}
Quantifier elimination yields the quantifier-free formula $\top$.
\end{example}

\begin{example}[Roots of unity]\label{ex:roots-of-unity}
In contrast to the real numbers, the complex numbers contain a square root of $-1$. This can be expressed as $\varphi_1 = \exists z(z^2 + 1 = 0)$, for which quantifier elimination yields the quantifier-free formula $\top$. More generally,
Weispfenning discusses the equivalence of the following formula to $d^4 - 1 = 0$ as an example for complex \qe using comprehensive Gröbner bases \cite{Weispfenning:92a}:
\begin{equation*}
    \varphi_2 = \exists c \forall b \forall a \bigl((a = d \myland b = c) \mylor (a = c \myland b = 1) \mylongrightarrow b = a^2 \bigr).
\end{equation*}
Our \qe applied to $\varphi_2$ yields $d + 1 = 0 \mylor d + i = 0 \mylor d - 1 = 0 \mylor d - i = 0$, corresponding to the factorization of $d^4 - 1$ over $\CC$.
\end{example}

\begin{example}[Counterexample for geometry provers]\label{ex:counterexample-geometry-provers}
In his famous monograph \cite{Chou:88a}, Chou presents the following example of a formula that holds over $\RR$ but not over $\CC$:
\begin{equation*}
    \varphi = \forall x_{1} \forall x_{2} \bigl(x_{1}^2 + x_{2}^2 = 1 \myland x_{1} = 2 \mylongrightarrow x_{2} = 1\bigr).
\end{equation*}
Indeed, the unit circle $x_1^2 + x_2^2 = 1$ does not intersect with the line $x_1 = 2$, hence the left side of the implication is equivalent to $\bot$. Our \qe yields the quantifier-free formula $\bot$ over $\CC$.
\end{example}

Hilbert spaces generalize the Euclidean space $\RR^n$ to a more abstract setting where there still exists a notion of angles, captured via an inner product. They are fundamental in science and engineering, particularly in the formulation of quantum mechanics \cite{Shankar:94} as well as in various applications in signal processing \cite{KennedySadeghi:13} and machine learning \cite{SchoelkopfSmola:01}. We consider the Hilbert space $\CC^n$ with an inner product $\langle \cdot, \cdot \rangle: \CC^n \times \CC^n \to \CC$ defined by
\begin{equation*}
    \langle \vv, \ww \rangle = \sum_{i=1}^n v_i \cdot \cj{w_i}
\end{equation*}
for all $\vv = (v_1, \dots, v_n) \in \CC^n$ and $\ww = (w_1, \dots, w_n) \in \CC^n$.

\begin{example}[Orthogonality]\label{ex:orthogonality}
Two vectors $\vv, \ww \in \CC^n$ are called \emph{orthogonal} if $\langle \vv, \ww \rangle = 0$. Consider the following formula with free variables $v_1$, \dots, $v_n$:
\begin{equation*}
    \varphi = \forall w_1 \dots \forall w_n \bigl(\langle \vv, \ww \rangle = 0\bigr).
\end{equation*}
We obtain the quantifier-free formula $v_1 = 0 \myland v_2 = 0 \myland v_3 = 0$ for $n = 3$. This proves a special case of the more general fact that $\vv \in \CC^n$ is zero if and only if it is orthogonal to all vectors in $\CC^n$. Table~\ref{tab:orthogonality-series-runtime} summarizes computation times for growing values of $n$.
\end{example}

\begin{table}
\centering
\begin{tabularx}{0.9\textwidth}{c*{8}{>{\centering\arraybackslash}X}}
\hline
$n$ & 10 & 15 & 20 & 25 & 30 & 35 & 40 & 45\\
\hline
time (s) & 1.58 & 3.78 & 7.42 & 12.89 & 21.06 & 32.49 & 48.59 & 69.28 \\
\hline
\end{tabularx}
\caption{Wall clock computation times of Example~\ref{ex:orthogonality} (Orthogonality) for increasing dimension $n$. All computations have been performed on an 8 + 4 core M4 MacBook Pro, Nov 2024, with 24~GB RAM\label{tab:orthogonality-series-runtime}}
\end{table}

\begin{example}[Cauchy--Schwarz inequality]\label{ex:cauchy-schwarz-inequality}
The Cauchy--Schwarz inequality is fundamental in linear algebra and functional analysis. It can be expressed by the following formula with free variables $v_1$, \dots,~$v_n$:
\begin{equation*}
    \varphi = \forall w_1 \dots \forall w_n \bigl(\langle \vv, \ww \rangle \cdot \cj{\langle \vv, \ww \rangle} \leq \langle \vv, \vv \rangle \cdot \langle \ww, \ww \rangle\bigr),
\end{equation*}
where $\langle \vv, \ww \rangle \cdot \cj{\langle \vv, \ww \rangle} = |\langle \vv, \ww \rangle|^2$. We obtain the quantifier-free formula $\top$ for $n$ from $1$ to $4$.
\end{example}

In quantum mechanics, the state of a physical system is represented by a vector in a complex Hilbert space $H$, and physical observables correspond to self-adjoint linear operators $A \colon H \to H$. The expected value of the measurement outcome of an observable $A$ on a state $\vv \in H$ is  $\langle A \vv, \vv \rangle$. In the special case of a single qubit, the state space is given by $H = \CC^2$.

\begin{example}[Self-adjoint matrices]\label{ex:self-adjoint-matrices}
Consider a matrix $A = (a_{ij})_{i,j=1}^n \in \CC^{n \times n}$. Then $A$ is self-adjoint if and only if
\begin{equation*}
    \varphi_1 = \forall v_1 \dots \forall v_n \forall w_1 \dots \forall w_n \bigl(\langle A \vv, \ww \rangle = \langle \vv, A \ww \rangle\bigr).
\end{equation*}
Applying \qe in the case of a single qubit, i.e. $n = 2$, we obtain the equivalent quantifier-free formula $a_{11} - \cj{a_{11}} = 0 \myland a_{12} - \cj{a_{21}} = 0 \myland a_{22} - \cj{a_{22}} = 0$. This is a special case of the well-known characterization of self-adjoint matrices as those that are equal to their conjugate transpose, i.e. $a_{ij} = \cj{a_{ji}}$ for all $i$, $j \in \{1, \dots, n\}$. Alternatively, self-adjointness can be characterized by the following formula:
\begin{equation*}
    \varphi_2 = \forall v_1 \dots \forall v_n \bigl(\im(\langle A \vv, \vv \rangle) = 0\bigr).
\end{equation*}
Applying \qe again in the case $n = 2$, we obtain the same quantifier-free equivalent as above. In particular, this shows that the expected measurement outcome of a quantum observable $A$ on a single qubit is always real, i.e. physically meaningful.
\end{example}

\begin{example}[Density matrices]\label{ex:density-matrices}
A probabilistic mixture of states of a quantum system can be represented by a density matrix, which is a self-adjoined, positive semidefinite matrix with trace $1$. In the case of a single qubit, a self-adjoined matrix $A \in \CC^{2 \times 2}$ with trace $1$ has the form
\begin{equation*}
    A = \begin{pmatrix}
    \re(a) & b\\
    \cj{b} & 1 - \re(a)
    \end{pmatrix}
    \quad \text{for $a$, $b \in \CC$}.
\end{equation*}
The positive semidefiniteness of $A$ can be expressed by the following formula:
\begin{equation*}
    \varphi = \forall v_1 \forall v_2 \bigl(\langle A \vv, \vv \rangle \geq 0\bigr).
\end{equation*}
Note that $\langle A \vv, \vv \rangle \in \RR$ for all $\vv \in \CC^2$, which can be proved using a normal form computation. Applying \qe, we obtain the following quantifier-free formula:
\begin{multline*}
            \re(a) \geq 0
     \myland   \bigl(\re(a) = 0 \mylor \re(a)^{2} + \re(\mathit{b})^{2} + \im(b)^{2} - \re(a) \leq 0\bigr) \\
     {}\myland \bigl(\re(a) > 0 \mylor \re(b) = 0\bigr)
     \myland \bigl(\re(a) > 0 \mylor \re(b) \leq 0\bigr)
     \myland \bigl(\re(a) > 0 \mylor \re(b) \geq 0\bigr)\\
     {}\myland \bigl(\re(a) > 0 \mylor \im(b) = 0 \mylor \re(b)^{2} \im(b) + \im(b)^{3} \leq 0\bigr)\\
     {}\myland \bigl(\re(a) > 0 \mylor \im(b) = 0 \mylor \re(b)^{2} \im(b) + \im(b)^{3} \geq 0\bigr)\\
     {}\myland \bigl(\re(a) > 0 \mylor \im(b) \leq 0\bigr)
     \myland\bigl(\re(a) > 0 \mylor \im(b) \geq 0\bigr).
\end{multline*}
This can be further simplified by hand to $\re(b)^{2} + \im(b)^{2} \leq \re(a) \cdot (1 - \re(a))$.
\end{example}

Another important application of complex numbers in science and engineering is in the analysis of electrical circuits \cite{Nilsson:14}. A special class of such circuits are $RC$ filters, which are designed to allow signals of certain frequencies to pass while damping signals of other frequencies. In the analysis of filters, their transfer function $H(s) = {p(s)}/{q(s)}$ plays a central role, where $p(s)$ and $q(s)$ are polynomials in a complex variable $s$; see \cite{Oppenheim:97}.

The \qe in our following examples using external assumptions \cite{DSW:98}. The input is a first-order formula $\varphi$ plus a list $A$ of atoms. The output is a quantifier-free formula $\varphi'$ such that $\CC \models \bigwedge A \mylongrightarrow (\varphi \mylongleftrightarrow \varphi')$. Note that this yields $\CC \models \varphi \mylongleftrightarrow \varphi'$ as usual when $A$ is empty.

\begin{example}[Gain of passive $RC$ high-pass filter]\label{ex:gain-passive-rc-high-pass-filter}
Figure~\ref{fig:passive-rc-high-pass-filter} shows a circuit diagram of a passive $RC$ high-pass filter, which consists of a capacitor $C$ and a resistor $R$. It has the transfer function $H(s) = {p(s)}/{q(s)}$ with $p(s) = R Cs$ and $q(s) = RCs + 1$. The gain at a frequency $\omega \in \RR$ is given by $|H(\omega\ii)|$. The following formula expresses that the gain of the filter is bounded by $g$ for all frequencies:
\begin{equation*}
    \varphi = \forall s \bigl(\Re(s) = 0 \mylongrightarrow p(s)\cj{p(s)} < g^2 \cdot q(s)\cj{q(s)}\bigr),
\end{equation*}
where $p(s)\cj{p(s)} = |p(s)|^2$ and $q(s)\cj{q(s)} = |q(s)|^2$. Under the assumptions
\begin{equation*}
    A = \{\re(R) > 0,\, \im(R) = 0,\, \re(C) > 0,\, \im(C) = 0,\, \re(g) > 0,\, \im(g) = 0\},
\end{equation*}
we obtain the quantifier-free formula
\begin{equation*}
    \re(g)^2 - 1 \geq 0.
\end{equation*}
Recall that there is an assumption $\im(g) = 0$ in $A$.

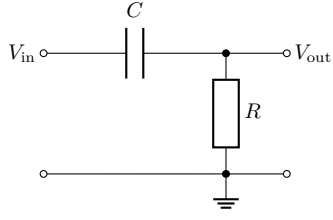
\begin{figure}
\centering
\begin{circuitikz}[european, scale=0.8, transform shape]
\draw
  (0,0) node[left]{$V_{\text{in}}$}
  to[short, o-] (1,0)
  to[C, l=$C$] (2,0)
  to[short, -o] (4,0)
  node[right]{$V_{\text{out}}$};
\draw
  (3, 0)
  node[circ]{} to[R, l=$R$] (3,-2)
  node[circ]{} node[ground]{};
\draw
    (0,-2) to[short, o-o] (4,-2);
\end{circuitikz}
\caption{Circuit diagram of a passive $RC$ high-pass filter}
\label{fig:passive-rc-high-pass-filter}
\end{figure}
\end{example}

\begin{example}[Stability of active $RC$ filter]\label{ex:stability-active-rc-filter}
Figure~\ref{fig:active-rc-filter} shows a circuit diagram of the active $RC$ filter from \cite[Example 1]{Gielen:94}. It consists of resistors $R_1$, \dots,~$R_{11}$, capacitors $C_1$, $C_2$ and four operational amplifiers. Denote by $G_i = {1}/{R_i}$ the conductance of resistor $R_i$. Then the transfer function is given by $H(s) = {p(s)}/{q(s)}$ with $p(s)$ and $q(s)$ as follows:
\begin{align*}
    p(s) & = -G_4G_8(G_1G_2G_9 + G_1G_3G_9 + G_1G_9G_1 + G_2G_6G_9) \\
    & \phantom{=} \quad + G_7C_2(G_1G_3G_9 + G_1G_3G_1) \cdot s  - G_2G_7C_1C_2G_9 \cdot s^2, \\
    q(s) &= G_1G_9G_4G_6G_8 + G_1G_9G_5G_7G_2 \cdot s + G_1G_9G_7G_1G_2 \cdot s^2.
\end{align*}
A circuit is called \emph{stable} if its transfer function has no poles in the right half-plane. The following formula gives a sufficient condition with free variables $G_1$, \dots,~$G_9$, $C_1$, $C_2$:
\begin{equation*}
    \varphi = \forall s \bigl(q(s) = 0 \mylongrightarrow \Re(s) < 0\bigr).
\end{equation*}
Quantifier elimination yields the quantifier-free formula $\top$ under the physically motivated assumption that all free variables stand for positive real numbers, i.e.,
\begin{equation*}
    A = \bigl\{\re(G_i) > 0,\, \im(G_i) = 0\bigr\}_{i \in \{1, \dots, 9\}}
    \cup \bigl\{\re(C_j) > 0,\, \im(C_j) = 0\bigr\}_{j \in \{1, 2\}}.
\end{equation*}

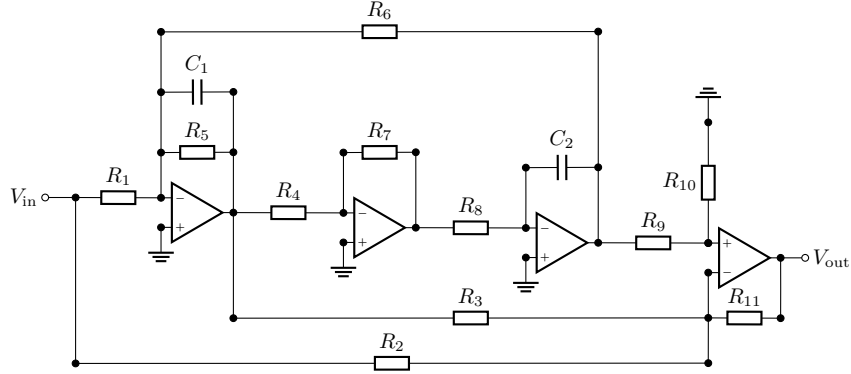
\begin{figure}
\centering
\begin{circuitikz}[european, scale=0.8, transform shape]
\ctikzset{resistors/scale=0.5}
\ctikzset{capacitors/scale=0.5}
\draw
    (3, -0.25) node[op amp, scale=0.5] (opamp1) {}
    (6, -0.5) node[op amp, scale=0.5] (opamp2) {}
    (9, -0.75) node[op amp, scale=0.5] (opamp3) {}
    (12, -1) node[op amp, scale=0.5, noinv input up] (opamp4) {}
    (opamp1.+) node[circ] {} node[ground] {}
    (opamp2.+) node[circ] {} node[ground] {}
    (opamp3.+) node[circ] {} node[ground] {};
\draw
    (0.5,0) node[left]{$V_{\text{in}}$} to[short, o-]
    (1, 0)     to[R, l=$R_1$, european, *-*] (opamp1.-)
    (opamp1.out) to[R, l=$R_4$, *-*] (opamp2.-)
    (opamp2.out) to[R, l=$R_8$, *-*] (opamp3.-)
    (opamp3.out) to[R, l=$R_9$, *-*] (opamp4.+)
    (opamp4.out) to[short, *-o] (13, -1) node[right]{$V_{\text{out}}$};
\draw
    ($(opamp1.-) + (0,0.75)$) to[R, l=$R_5$, *-*] ($(opamp1.out) + (0,1)$)
    ($(opamp1.-) + (0,1.75)$) to[C, l=$C_1$, *-*] ($(opamp1.out) + (0,2)$) to[short] (opamp1.out)
    (opamp1.-) to[short] ($(opamp1.-) + (0,2.75)$) to[R, l=$R_6$, *-*] ($(opamp3.out) + (0,3.5)$) to[short] (opamp3.out)
    (opamp2.-) to[short] ($(opamp2.-) + (0,1)$) to[R, l=$R_7$, *-*] ($(opamp2.out) + (0,1.25)$) to[short] (opamp2.out)
    (opamp3.-) to[short] ($(opamp3.-) + (0,1)$) to[C, l=$C_2$, *-*] ($(opamp3.out) + (0,1.25)$)
    (opamp4.+) to[R, l=$R_{10}$, *-*] ($(opamp4.+) + (0,2)$) node[ground, rotate=180]{}
    (opamp1.out) to[short] ($(opamp1.out) - (0,1.75)$) to[R, l=$R_3$, *-*] ($(opamp4.-) - (0,0.75)$)
    (1, 0) to[short] (1, -2.75) to[R, l=$R_2$, *-*] ($(opamp4.-) - (0,1.5)$)
    to[short, -*] (opamp4.-)
    ($(opamp4.-) - (0,0.75)$) to[R, l=$R_{11}$, -*] ($(opamp4.out) - (0,1)$)
    to[short, -] (opamp4.out)
    ;
\end{circuitikz}
\caption{Circuit diagram of an active $RC$ filter; see \cite[Example 1]{Gielen:94}}
\label{fig:active-rc-filter}
\end{figure}
\end{example}

\begin{table}
    \centering
    \begin{tabular}{r@{\quad}l@{\quad}c@{\quad}r}
    \hline
    &  & normal form used & time (s) \\
    \hline
    \ref{ex:cartesian-coordinates} & Cartesian coordinates & conjugate & 0.04 \\
    \ref{ex:roots-of-unity} & Roots of unity, $\varphi_1$ & conjugate & 0.03 \\
    \ref{ex:roots-of-unity} & Roots of unity, $\varphi_2$ & conjugate & 1.23 \\
    \ref{ex:counterexample-geometry-provers} & Counterexample for geometry provers & conjugate & 0.07 \\
    \ref{ex:orthogonality} & Orthogonality ($n = 3$) & conjugate & 0.2 \\
    \ref{ex:cauchy-schwarz-inequality} & Cauchy-Schwarz inequality ($n = 4$) & conjugate & 24.57 \\
    \ref{ex:self-adjoint-matrices} & Self-adjoint matrices, $\varphi_1$ & conjugate & 1.28 \\
    \ref{ex:self-adjoint-matrices} & Self-adjoint matrices, $\varphi_2$ & conjugate & 0.91 \\
    \ref{ex:density-matrices} & Density matrices & Cartesian & 1.79 \\
    \ref{ex:gain-passive-rc-high-pass-filter} & Gain of passive $RC$ high-pass filter & Cartesian & 2.64 \\
    \ref{ex:stability-active-rc-filter} & Stability of active $RC$ filter & conjugate & 3.76\\
    \hline
    \end{tabular}
    \caption{Summary of the wall-clock computation times of our Examples~\ref{ex:cartesian-coordinates}--\ref{ex:stability-active-rc-filter}. All computations have been performed on an 8 + 4 core M4 MacBook Pro, Nov 2024, with 24~GB RAM. We also annotate the term normal form chosen for the respective example\label{tab:examples-runtime}}
\end{table}

\section{Conclusions}\label{SE:conclusions}
We have introduced a formal logical framework for the complex numbers that includes symbols for the imaginary unit, as well as for real and imaginary parts and complex conjugation. Within this framework, additional operations, such as the complex absolute value, become definable. The corresponding first-order theory is complete and decidable, and it admits effective quantifier elimination. From a technical perspective, our approach builds on quantifier elimination over the real numbers, allowing existing implementations to be used as a black box. Compared with the classical theory of algebraically closed fields of characteristic zero in the language of rings, our framework is often better aligned with the needs of applications in the sciences and engineering. To illustrate its practical utility, we have presented several academic examples computed using an implementation in our Python-based open-source system Logic1.

\section*{Acknowledgments}
Nicolas Faroß is currently supported by SSF (Swedish Foundation for Strategic Research), grant number FUS21-0063.

\bibliographystyle{plainnat}
\bibliography{pseudo-complex-qe}

\begin{thebibliography}{39}
\providecommand{\natexlab}[1]{#1}
\providecommand{\url}[1]{\texttt{#1}}
\expandafter\ifx\csname urlstyle\endcsname\relax
  \providecommand{\doi}[1]{doi: #1}\else
  \providecommand{\doi}{doi: \begingroup \urlstyle{rm}\Url}\fi

\bibitem[Aubry et~al.(1999)Aubry, Lazard, and
  Moreno~Maza]{AubryLazardMorenoMaza1999}
Philippe Aubry, Daniel Lazard, and Marc Moreno~Maza.
\newblock On the theories of triangular sets.
\newblock \emph{Journal of Symbolic Computation}, 28\penalty0 (1--2):\penalty0
  105--124, 1999.
\newblock \doi{10.1006/jsco.1999.0269}.

\bibitem[Basu et~al.(1996)Basu, Pollack, and Roy]{BasuPollack:96a}
Saugata Basu, Richard Pollack, and Marie-Fran\c{c}oise Roy.
\newblock On the combinatorial and algebraic complexity of quantifier
  elimination.
\newblock \emph{Journal of the ACM}, 43\penalty0 (6):\penalty0 1002--1045,
  October 1996.
\newblock \doi{10.1145/235809.235813}.

\bibitem[Brown and Davenport(2007)]{Brown:2007:CQE:1277548.1277557}
Christopher~W. Brown and James~H. Davenport.
\newblock The complexity of quantifier elimination and cylindrical algebraic
  decomposition.
\newblock In \emph{Proceedings of the 2007 International Symposium on Symbolic
  and Algebraic Computation (ISSAC '07)}, pages 54--60. ACM, 2007.
\newblock \doi{10.1145/1277548.1277557}.

\bibitem[Chen and Moreno~Maza(2012)]{ChenMorenoMaza2012}
Changbo Chen and Marc Moreno~Maza.
\newblock Algorithms for computing triangular decompositions of polynomial
  systems.
\newblock \emph{Journal of Symbolic Computation}, 47\penalty0 (6):\penalty0
  610--642, 2012.
\newblock \doi{10.1016/j.jsc.2011.12.023}.

\bibitem[Chevalley(1943)]{Chevalley1943}
Claude Chevalley.
\newblock On the theory of local rings.
\newblock \emph{Annals of Mathematics}, 44\penalty0 (4):\penalty0 690--708,
  1943.
\newblock \doi{10.2307/1969105}.

\bibitem[Chistov and Grigor'ev(1984)]{10.1007/BFb0030287}
A.~L. Chistov and D.~Yu. Grigor'ev.
\newblock Complexity of quantifier elimination in the theory of algebraically
  closed fields.
\newblock In M.~P. Chytil and V.~Koubek, editors, \emph{Mathematical
  Foundations of Computer Science 1984}, volume 176 of \emph{Lecture Notes in
  Computer Science}, pages 17--31. Springer, 1984.
\newblock ISBN 978-3-540-38929-3.
\newblock \doi{10.1007/BFb0030287}.

\bibitem[Chistov and Grigor'ev(1991)]{ChistovGrigoriev1984combined}
A.~L. Chistov and D.~Yu. Grigor'ev.
\newblock Subexponential-time solving systems of algebraic equations.
\newblock \emph{Leningrad Mathematical Journal}, 2\penalty0 (6), 1991.
\newblock Parts I and II. English translation of Zap. Nauchn. Sem. LOMI 137
  (1984).

\bibitem[Chou(1988)]{Chou:88a}
Shang-Ching Chou.
\newblock \emph{Mechanical Geometry Theorem Proving}, volume~41 of
  \emph{Mathematics and Its Applications}.
\newblock D. Reidel Publishing Company, Dordrecht, 1988.
\newblock ISBN 978-90-277-2650-6.

\bibitem[Collins(1975)]{Collins:75}
George~E. Collins.
\newblock Quantifier elimination for real closed fields by cylindrical
  algebraic decomposition.
\newblock In H.~Brakhage, editor, \emph{Automata Theory and Formal Languages.
  2nd GI Conference}, volume~33 of \emph{Lecture Notes in Computer Science},
  pages 134--183. Springer, 1975.
\newblock \doi{10.1007/3-540-07407-4_17}.

\bibitem[Collins and Hong(1991)]{CollinsHong:91}
George~E. Collins and Hoon Hong.
\newblock Partial cylindrical algebraic decomposition for quantifier
  elimination.
\newblock \emph{Journal of Symbolic Computation}, 12\penalty0 (3):\penalty0
  299--328, September 1991.
\newblock \doi{10.1016/S0747-7171(08)80152-6}.

\bibitem[Davenport and Heintz(1988)]{DavenportHeintz:88a}
James~H. Davenport and Joos Heintz.
\newblock Real quantifier elimination is doubly exponential.
\newblock \emph{Journal of Symbolic Computation}, 5\penalty0 (1--2):\penalty0
  29--35, February--April 1988.
\newblock \doi{10.1016/S0747-7171(88)80004-X}.

\bibitem[Davis(1954)]{Davis:54a}
Martin Davis.
\newblock Final report on mathematical procedures for decision problems.
\newblock Technical report, Institute for Advanced Study, Princeton, NJ,
  October 1954.

\bibitem[Dolzmann and Sturm(1997{\natexlab{a}})]{DolzmannSturm:97a}
Andreas Dolzmann and Thomas Sturm.
\newblock Redlog: Computer algebra meets computer logic.
\newblock \emph{ACM SIGSAM Bulletin}, 31\penalty0 (2):\penalty0 2--9, June
  1997{\natexlab{a}}.
\newblock \doi{10.1145/261320.261324}.

\bibitem[Dolzmann and Sturm(1997{\natexlab{b}})]{DolzmannSturm:97b}
Andreas Dolzmann and Thomas Sturm.
\newblock Simplification of quantifier-free formulae over ordered fields.
\newblock \emph{Journal of Symbolic Computation}, 24\penalty0 (2):\penalty0
  209--232, 1997{\natexlab{b}}.
\newblock \doi{https://doi.org/10.1006/jsco.1997.0123}.

\bibitem[Dolzmann et~al.(1998)Dolzmann, Sturm, and Weispfenning]{DSW:98}
Andreas Dolzmann, Thomas Sturm, and Volker Weispfenning.
\newblock A new approach for automatic theorem proving in real geometry.
\newblock \emph{Journal of Automated Reasoning}, 21\penalty0 (3):\penalty0
  357--380, December 1998.
\newblock \doi{10.1023/A:1006031329384}.

\bibitem[Dolzmann et~al.(1999)Dolzmann, Sturm, and
  Weispfenning]{DolzmannSturm:99a}
Andreas Dolzmann, Thomas Sturm, and Volker Weispfenning.
\newblock Real quantifier elimination in practice.
\newblock In B.~H. Matzat, G.-M. Greuel, and G.~Hiss, editors,
  \emph{Algorithmic Algebra and Number Theory}, pages 221--247. Springer, 1999.
\newblock \doi{10.1007/978-3-642-59932-3_11}.

\bibitem[Gabow(2018)]{Gabow:18}
Harold~N. Gabow.
\newblock Data structures for weighted matching and extensions to b-matching
  and f-factors.
\newblock \emph{ACM Transactions on Algorithms}, 14\penalty0 (3):\penalty0
  1--80, 2018.
\newblock \doi{10.1145/3183369}.

\bibitem[Gielen et~al.(1994)Gielen, Wambacq, and Sansen]{Gielen:94}
G.~Gielen, P.~Wambacq, and W.M. Sansen.
\newblock Symbolic analysis methods and applications for analog circuits: a
  tutorial overview.
\newblock \emph{Proceedings of the IEEE}, 82\penalty0 (2):\penalty0 287--304,
  1994.
\newblock \doi{10.1109/5.265355}.

\bibitem[Grigoriev(1988)]{Grigoriev:88a}
D.~Yu. Grigoriev.
\newblock Complexity of deciding {T}arski algebra.
\newblock \emph{Journal of Symbolic Computation}, 5\penalty0 (1--2):\penalty0
  65--108, February--April 1988.
\newblock \doi{10.1016/S0747-7171(88)80006-3}.

\bibitem[Heintz(1983)]{Heintz:83a}
Joos Heintz.
\newblock Definability and fast quantifier elimination in algebraically closed
  fields.
\newblock \emph{Theoretical Computer Science}, 24\penalty0 (3):\penalty0
  239--277, August 1983.
\newblock \doi{10.1016/0304-3975(83)90002-6}.

\bibitem[Kennedy and Sadeghi(2013)]{KennedySadeghi:13}
Rodney~A. Kennedy and Parastoo Sadeghi.
\newblock \emph{Hilbert Space Methods in Signal Processing}.
\newblock Cambridge University Press, 2013.
\newblock \doi{10.1017/CBO9780511844515}.

\bibitem[Ko{\v s}ta(2016)]{Kosta:16a}
Marek Ko{\v s}ta.
\newblock \emph{New Concepts for Real Quantifier Elimination by Virtual
  Substitution}.
\newblock Doctoral dissertation, Saarland University, Germany, December 2016.

\bibitem[Le and Safey El~Din(2021)]{LeSafeyElDin2021}
Huu~Phuoc Le and Mohab Safey El~Din.
\newblock Faster one block quantifier elimination for regular polynomial
  systems of equations.
\newblock In \emph{Proceedings of the 2021 International Symposium on Symbolic
  and Algebraic Computation (ISSAC '21)}, pages 265--272. ACM, 2021.
\newblock \doi{10.1145/3452143.3465546}.

\bibitem[Nilsson and Riedel(2014)]{Nilsson:14}
James~W. Nilsson and Susan Riedel.
\newblock \emph{Electric Circuits}.
\newblock Pearson Education, 2014.
\newblock ISBN 9780133594812.

\bibitem[Oppenheim et~al.(1997)Oppenheim, Willsky, and Nawab]{Oppenheim:97}
Alan~V. Oppenheim, Alan~S. Willsky, and S.~Hamid Nawab.
\newblock \emph{Signals \& Systems}.
\newblock Prentice-Hall signal processing series. Prentice Hall, 1997.
\newblock ISBN 9780138147570.

\bibitem[Rahkooy and Sturm(2021)]{RahkooySturm:21a}
Hamid Rahkooy and Thomas Sturm.
\newblock Parametric toricity of steady state varieties of reaction networks.
\newblock In \emph{Computer Algebra in Scientific Computing (CASC 2021)},
  volume 12865 of \emph{Lecture Notes in Computer Science}, pages 314--333.
  Springer, 2021.
\newblock \doi{10.1007/978-3-030-85165-1_18}.

\bibitem[Renegar(1992)]{DBLP:journals/jsc/Renegar92combined}
James Renegar.
\newblock On the computational complexity and geometry of the first-order
  theory of the reals.
\newblock \emph{Journal of Symbolic Computation}, 13\penalty0 (3):\penalty0
  255--352, March 1992.
\newblock \doi{10.1016/S0747-7171(10)80003-3}.
\newblock Parts I--III.

\bibitem[Schrijver(2002)]{Schrijver:02}
Alexander Schrijver.
\newblock \emph{Combinatorial Optimization}, volume~24 of \emph{Algorithms and
  Combinatorics}.
\newblock Springer Berlin, Heidelberg, 2002.
\newblock ISBN 9783540443896.

\bibitem[Schölkopf and Smola(2001)]{SchoelkopfSmola:01}
Bernhard Schölkopf and Alexander~J. Smola.
\newblock \emph{Learning with Kernels: Support Vector Machines, Regularization,
  Optimization, and Beyond}.
\newblock The MIT Press, 2001.
\newblock \doi{10.7551/mitpress/4175.001.0001}.

\bibitem[Seidenberg(1954)]{Seidenberg1954}
Abraham Seidenberg.
\newblock A new decision method for elementary algebra.
\newblock \emph{Annals of Mathematics}, 60\penalty0 (2):\penalty0 365--374,
  1954.
\newblock \doi{10.2307/1969640}.

\bibitem[Shankar(1994)]{Shankar:94}
R.~Shankar.
\newblock \emph{Principles of Quantum Mechanics}.
\newblock Springer New York, 1994.
\newblock \doi{10.1007/978-1-4757-0576-8}.

\bibitem[Sturm(2017)]{Sturm:17a}
Thomas Sturm.
\newblock A survey of some methods for real quantifier elimination, decision,
  and satisfiability and their applications.
\newblock \emph{Mathematics in Computer Science}, 11\penalty0 (3--4):\penalty0
  483--502, December 2017.
\newblock \doi{10.1007/s11786-017-0319-z}.

\bibitem[Tarski(1930)]{Tarski1930}
Alfred Tarski.
\newblock {\"U}ber einige fundamentale {B}egriffe der {M}etamathematik.
\newblock \emph{Comptes Rendus des S{\'e}ances de la Soci{\'e} des Sciences et
  des Lettres de Varsovie, Classe III}, 23:\penalty0 22--29, 1930.

\bibitem[Tarski(1954)]{Tarski1954}
Alfred Tarski.
\newblock Contributions to the theory of models. {I}.
\newblock \emph{Indagationes Mathematicae}, 57:\penalty0 572--581, 1954.
\newblock \doi{10.1016/S1385-7258(54)50074-0}.

\bibitem[Tarski(1957)]{Tarski:48a}
Alfred Tarski.
\newblock A decision method for elementary algebra and geometry. {P}repared for
  publication by {J}.~{C}.~{C}.~{M}c{K}insey.
\newblock RAND Report R109, August 1, 1948, Revised May 1951, Second Edition,
  RAND, Santa Monica, CA, 1957.

\bibitem[Weispfenning(1988)]{Weispfenning:88a}
Volker Weispfenning.
\newblock The complexity of linear problems in fields.
\newblock \emph{Journal of Symbolic Computation}, 5\penalty0 (1--2):\penalty0
  3--27, February--April 1988.
\newblock \doi{10.1016/S0747-7171(88)80003-8}.

\bibitem[Weispfenning(1992)]{Weispfenning:92a}
Volker Weispfenning.
\newblock Comprehensive {G}r{\"o}bner bases.
\newblock \emph{Journal of Symbolic Computation}, 14\penalty0 (1):\penalty0
  1--29, July 1992.
\newblock \doi{10.1016/0747-7171(92)90023-W}.

\bibitem[Weispfenning(1997)]{Weispfenning:97b}
Volker Weispfenning.
\newblock Quantifier elimination for real algebra---the quadratic case and
  beyond.
\newblock \emph{Applicable Algebra in Engineering, Communication and
  Computing}, 8\penalty0 (2):\penalty0 85--101, January 1997.
\newblock \doi{10.1007/s002000050055}.

\bibitem[Weispfenning(1998)]{Weispfenning:98a}
Volker Weispfenning.
\newblock A new approach to quantifier elimination for real algebra.
\newblock In B.~F. Caviness and J.~R. Johnson, editors, \emph{Quantifier
  Elimination and Cylindrical Algebraic Decomposition}, Texts and Monographs in
  Symbolic Computation, pages 376--392. Springer, 1998.
\newblock \doi{10.1007/978-3-7091-9459-1_20}.

\end{thebibliography}

\end{document}